\definecolor{darkgreen}{rgb}{0.2, 0.5, 0.2}
\definecolor{darkblue}{rgb}{0.2,0.2, 0.8}
\newcommand{\mcal}[1]{\mathcal{#1}}
\newcommand{\intv}[1]{\left[#1\right]}
\newcommand {\sen}[1]{\mcal {#1}}
\newcommand {\seq}[1]{\set{#1}}
\newcommand{\set}[1]{\left\lbrace #1 \right\rbrace}
\newcommand{\size}[1]{ \left| #1 \right|}
\newcommand{\ind}[2]{\mathsf{ind}_{#1} (#2)}
\newtheorem{rem}{Remark}
\newtheorem{defi}{Definition}
\newtheorem{prop}{Proposition}
\newtheorem{thm}{Theorem}
\newtheorem{remark}{Remark}
\newcommand{\bX}{\mathbf{X}}
\newcommand{\cC}{\mathcal{C}}
\newcommand{\cE}{\mathcal{E}}
\newcommand{\cH}{\mathcal{H}}
\newcommand{\cV}{\mathcal{V}}
\newcommand{\cW}{\mathcal{W}}
\newcommand{\enc}{\mathbf{\Psi}}
\newcommand{\hlp}{\sen H}
\newcommand{\SM}{\mathbf{D}}
\newcommand{\repMat}[2]{\mathbf{\Xi}^{#1,(#2)}}
\newcommand{\repFam}[2]{\mathbf{\Xi}^{#1,(#2)}}
\newcommand{\NSpc}[2]{\mathbf{Y}^{#1,(#2)}}
\newcommand{\repSpc}[2]{\mathbf{R}^{#1,(#2)}}
\renewcommand{\det}[1]{\mathsf{det}\left(#1 \right)}
\colorlet{colIndColor}{blue}
\colorlet{rowIndColor}{red}
\colorlet{repIndColor}{darkgreen}
\colorlet{nulIndColor}{magenta}
\colorlet{myblue}{blue!20}
\colorlet{myred}{red!35}
\definecolor{SecondInj}{RGB}{200,255,200}
\definecolor{FirstInj}{RGB}{180,230,180}
\definecolor{mygreen}{RGB}{115,225,100}
\tikzset{holeBox/.style= {preaction={ fill,#1},pattern= my crosshatch dots,pattern color=white},}
\tikzset{My Style/.style={red, draw=blue, fill=yellow!20, minimum size=0.5cm}}
\definecolor{vColor}{RGB}{255,255,255}
\tikzset{
	braces/.style = {
		outer sep=-1pt,
		left delimiter=[,
		right delimiter=],
		align=center,
	},
}
\colorlet{PColor}{blue}
\colorlet{QColor}{blue}
\colorlet{ZColor}{red}
\colorlet{1Color}{orange}
\colorlet{2Color}{magenta}
\colorlet{3Color}{cyan}
\definecolor{4Color}{RGB}{100,140,0}
\colorlet{5Color}{yellow}
\newcommand{\bLozenge}{:} 
\begin{document}
    \title{Determinant Codes with Helper-Independent Repair  for Single and Multiple Failures    
}
    \author{Mehran Elyasi, and Soheil~Mohajer,~\IEEEmembership{Member,~IEEE}
\thanks{M. Elyasi and 
S. Mohajer are with the Department of Electrical and Computer Engineering, University of Minnesota, Twin Cities, MN 55455, USA, (email: \{melyasi, soheil\}@umn.edu).}}
  \maketitle

\begin{abstract}
Determinant codes are a class of exact-repair regenerating codes for distributed storage systems with parameters $(n,k=d,d)$. These codes cover the entire trade-off between per-node storage and repair-bandwidth. In an earlier work of the authors, the repair data of the determinant code sent by a helper node to repair a failed node  depends on the identity of the other helper nodes participating in the process, which is practically undesired. In this work, a new repair mechanism is proposed for determinant codes, which relaxes this dependency, while preserving all other properties of the code. 
Moreover, it is shown that the determinant codes are capable of repairing  multiple failures, with a per-node repair-bandwidth which scales sub-linearly with the number of failures. 
\end{abstract}

\section{Introduction}

While individual storage units in  distributed storage systems (DSS) are subject to temporal or permanent failure, the entire system should be designed to avoid losing the stored data. 
Coding and storing redundant data is a standard approach 
to guarantee durability in such systems. Moreover,  these systems are equipped with a repair mechanism that allows for a replacement of a failed node. Such replacement  can be performed in the \emph{functional} or \emph{exact} sense. In functional repair, a failed node will be replaced by another one, so that the consequent  family of nodes maintains the data recovery and node-repair properties. In an exact repair process, the content of a failed node will be exactly replicated by the helpers.

Regeneration codes are introduced to manage data recovery and node repair mechanism in DSS. Formally, an $(n, k, d)$ regeneration code with parameters $(\alpha,\beta,F)$ encodes a file comprised of $F$ symbols (from a finite field $\mathbb{F}$) into $n$ segments (nodes) $W_1, W_2, \dots, W_n$ each of size $\alpha$, such that two important properties are fulfilled: (1) the entire file can be recovered from  every subset of $k$ nodes, and  (2) whenever a node fails (become inaccessible), it can be repaired by accessing $d$ remaining nodes and downloading $\beta$ symbols from each.

It turns out that there is a fundamental trade-off between
the minimum required per-node storage $\alpha$ and the repair-bandwidth $\beta$, to store a given
amount of data $F$ in a DSS. This tradeoff is fully characterized for functional  repair in the seminal work of Dimakis et. al \cite{dimakis2010network}, where it is shown to be achievable by random network coding. However, for the exact-repair problem, which is notably important from the practical perspective, characterization of the trade-off and design of optimum codes are widely open, except for some special cases. Construction of exact-repair regenerating codes for a system with arbitrary parameters $(n,k,d)$ is a complex task due to several combinatorial constraints to be satisfied. The number of such constraints dramatically increases  with $n$, the total number of nodes in the system.

There are known code constructions for the two extreme points on the tradeoff, namely, minimum bandwidth regeneration (MBR) \cite{rashmi2011optimal} and minimum storage regeneration (MSR) \cite{rashmi2011optimal, lin2015unified,ye2017explicit, sasidharan2016explicit,tian2017generic,ye2017explicitnearly } points.

A lower bound for the trade-off of the exact-repair regenerating codes with parameters $(n,k=d,d)$ is presented independently by \cite{elyasi2015linear, prakash2015storage, duursma2015shortened}, under the assumption that the underlying code is linear. This lower bound could be achieved for the special case of $n=k+1$ (i.e., the DSS can only tolerate one failure) by code constructions proposed in \cite{tian2015layered}  and \cite{goparaju2014new}. While the lower bound does not depend on $n$ (the total number of nodes in the system), the performance (storage capacity) of both code constructions in \cite{tian2015layered}  and \cite{goparaju2014new} degrades as $n$ exceeds $k+1$ (see Fig.~\ref{fig:compare}).

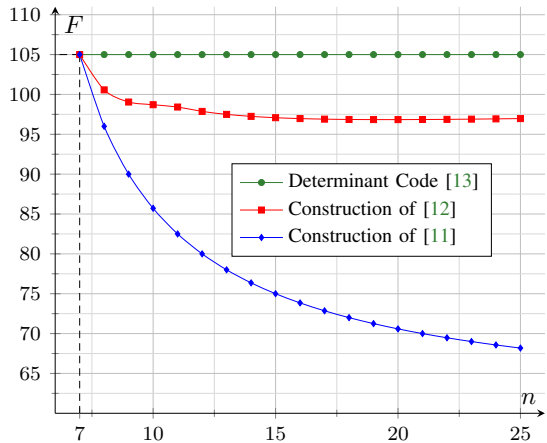
\begin{figure}
\begin{center}
	\scalebox{0.95}{
\begin{tikzpicture}
\begin{axis}[
legend style={at={(0.36,0.5)},anchor=west},
xmin=6,xmax=26,
ymin=60,ymax=111,
grid=both,
grid style={line width=.1pt, draw=gray!30},
major grid style={line width=.2pt,draw=gray!50},
axis lines=middle,
minor tick num=1,
mark options={scale=0.6},
legend cell align={left},
legend style={font=\footnotesize},
ticklabel style={font=\footnotesize,fill=white},
xlabel=$n$,
ylabel=$F$,
extra tick style={
	xmajorgrids=false,
	ymajorgrids=false,
},
extra x ticks={7},
ytick = {60,65, ..., 110},
xtick = {5,10, ..., 30},
]

\addplot[smooth,color=darkgreen,mark=*]
plot coordinates {
(7,105)
(8,105)
(9,105)
(10,105)
(11,105)
(12,105)
(13,105)
(14,105)
(15,105)
(16,105)
(17,105)
(18,105)
(19,105)
(20,105)
(21,105)
(22,105)
(23,105)
(24,105)
(25,105)
};
\addlegendentry{Determinant Code \cite{elyasi2016determinant}}

\addplot[smooth,color=red,mark=square*]
plot coordinates {
(7,105)
(8,704/7)
(9,2080/21)
(10,3060/31)
(11,1870/19)
(12,7536/77)
(13,195/2)
(14,41720/429)
(15,15630/161)
(16,3200/33)
(17,969/10)
(18,278280/2873)
(19,54910/567)
(20,1840/19)
(21,11235/116)
(22,11528/119)
(23,99130/1023)
(24,42360/437)
(25,3200/33)
};
\addlegendentry{Construction of \cite{goparaju2014new}}

\addplot[smooth,color=blue,mark=diamond*]
plot coordinates {
(7,105)
(8,96)
(9,90)
(10,600/7)
(11,165/2)
(12,80)
(13,78)
(14,840/11)
(15,75)
(16,960/13)
(17,510/7)
(18,72)
(19,285/4)
(20,1200/17)
(21,70)
(22,1320/19)
(23,69)
(24,480/7)
(25,750/11)
};
\addlegendentry{Construction of \cite{tian2015layered}}
\draw [densely dashed] (axis cs:7,105) -- (axis cs:7,0)  ;
\draw [densely dashed] (axis cs:7,105) -- (axis cs:0,105)  ;
\end{axis}
\end{tikzpicture}
}
\end{center}	
\caption{Comparison of storage capacity of three exact-regenerating codes for an $(n,k,k)$ distributed storage system, with $d=k=6$ at mode $m=3$, and code  parameters $(\alpha,\beta)= (20,10)$. All three codes can store $F=105$ units of data when DSS has only $n=d+1=7$ nodes. However, the storage capacity decays as a function of $n$ for codes introduced in \cite{goparaju2014new} and \cite{tian2015layered}, while the storage capacity is preserved for the determinant code.
}
\label{fig:compare}
\end{figure}

\subsection{Determinant Codes and Helper-Independent Repair}
Determinant codes are a family of exact repair regenerating codes, which are introduced in \cite{elyasi2016new, elyasi2016determinant} for a DSS with parameters $(n,k=d,d)$. The main property of these codes is to maintain a constant trade-off between $\alpha/F$ and $\beta/F$, regardless of the number of the nodes. In particular, these codes can achieve the lower bound \cite{elyasi2015linear, prakash2015storage, duursma2015shortened}, and hence they are optimum. The determinant codes have a linear structure and can be obtained from the inner product between an encoder matrix and the message matrix. Especially, product-matrix codes introduced in \cite{rashmi2011optimal} for MBR and MSR points can be subsumed from the general construction of the determinant codes. 

The repair mechanism proposed for the determinant codes in the original paper \cite{elyasi2016determinant} requires a rather heavy computation at the helper nodes  in order to prepare their repair symbols to send to the failed node. More importantly, each helper node $h\in \cH$ needs to know the identity of all the other helper nodes participating in the repair process.  The assumption of knowing the set of helpers in advance is a limitation of the determinant codes, and it is  undesired in  real-world systems.  In practice, it is preferable that once a request for a repair of a failed node is made, each node can independently decide to whether or not to participate in the repair process and generate the repair data from its content, regardless of the other helper nodes.

On the other hand, besides the repair bandwidth, one of the crucial bottlenecks in the performance of the storage systems is the I/O load, which refers to the amount of data to be read by a helper node to encode for a repair process. While the native constructions for exact repair generating code require a heavy I/O read, the repair-by-transfer (RBT) codes \cite{shah2012distributed} offer an optimum I/O. In \cite{rashmi2015having} an elegant modification is proposed to improve the I/O cost of product-matrix MSR codes, by pre-processing the content of the nodes and storing the repair data on non-systematic nodes instead of the original node content. This results in a semi-RBT code: whenever such modified nodes contribute in a repair process, they merely transfer some of their stored symbols without any computation. Such modification could not be applied on the original determinant codes since the repair symbols from a helper node $h$ to a failed node $f$ could be computed only when the set of other helper nodes $\cH$ is identified. 


In this paper, we propose a novel repair mechanism for the determinant codes introduced in \cite{elyasi2016determinant}. In the new repair procedure, data repair symbols from helper node $h$ to a failed node $f$ solely depend on the content of the helper node and the identity of the failed node $f$. The failed node collects a total of $d\beta$ repair symbols from the helper nodes and can reconstruct  all of its missing symbols by simple addition and subtraction of some of the received symbols. This simple repair scheme further allows for modifications proposed in \cite{rashmi2015having}, to further improve the I/O overhead of the code.

\subsection{Multiple Failures Repair}
The second contribution of this work is the simultaneous repair for multiple failures. Although single failures are the dominant type of failures in distributed storage systems \cite{rashmi2013solution}, multiple simultaneous failures occur rather frequently and need to be handled in order to maintain the system's reliability and fault-tolerance. The naive approach to deal with such failures is to repair each failed node individually and independently from the others. This requires a repair bandwidth from each helper node that scales  linearly with the number of failures. There are two types of repair for multiple failures studied in the literature \cite{zorgui2017centralized}: (i)  \emph{centralized regenerating codes} and (ii) \emph{cooperative regenerating codes}.
In centralized regenerating codes, a single data center is responsible for the repair of all failed nodes. More precisely, once a set of $e$ nodes in the  system fail, an arbitrary set of $d\leq n-e$ nodes are chosen, and  $\beta_{e}$ repair symbols will be downloaded from each helper node. This leads to  a total of $d\cdot \beta_e$ symbols which will be used to repair the content of all the failed nodes.  The storage-bandwidth trade-off of these codes are studied for two extreme points, namely the minimum storage multi-node repair (MSMR) and the minimum bandwidth multi-node repair (MBMR) points. In particular, in \cite{ye2017explicit}  a class of MSMR code is introduced, that are capable of repairing any number of failed nodes $e\leq n-k$ from any number of helper nodes $ k \leq d \leq n-e$, using an optimal repair bandwidth.  In cooperative regenerating codes upon failure of a node, the replacement node downloads repair data from a subset of $d$ helper nodes. In the case of multiple failures, the replacement nodes not only download repair data from the helper nodes, but also exchange information among themselves before regenerating the lost data, and this exchanged data between them is included in the repair bandwidth. Similar to the centralized  case, the trade-off for these codes for the two extreme points, namely the minimum bandwidth cooperative regeneration (MBCR) codes \cite{wang2013exact} and  the minimum storage cooperative regenerating (MSCR) codes \cite{shum2013cooperative,li2014cooperative, ye2018cooperative} are studied. In particular, in \cite{ye2018cooperative} authors introduced explicit constructions of MDS codes with optimal cooperative repair for all possible parameters. Also, they have shown that any MDS code with optimal repair bandwidth under the cooperative model also has optimal bandwidth under the centralized model.

In this work we show that the repair bandwidth required for multiple failures repair in determinant codes can be reduced by exploiting two facts: (i) the overlap between the repair space (linear dependency between the repair symbols) that each helper node sends to the set of failed nodes, and (ii) in the centralized repair, the data center (responsible for the repair process) can perform the repair of the  nodes in a sequential manner, and utilize already repaired nodes as helpers for the repair of the remaining failed nodes. Interestingly, using these properties we can limit the maximum (normalized) repair-bandwidth of the helper nodes to a certain fraction of $\alpha$, regardless of the number of failures. The structure of the code  allows us to analyze this overlap, and obtain a closed-form expression for the repair bandwidth. Our codes are not restricted only to the extreme points of the trade-off and can operate at any intermediate point on the optimum trade-off.  
A similar problem is studied in \cite{zorgui2017centralized}, where a class of codes is introduced to operate at the intermediate points of the trade-off, with an improved repair bandwidth for multiple failures. However, this improvement is obtained at the price of degradation of the system's storage capacity as $n$ (the total number of nodes) increases. Consequently, the resulting codes designed for two or more simultaneous failures are sub-optimum, and cannot achieve the optimum trade-off between the per-node capacity, repair bandwidth, and the overall storage capacity. One of the main advantages of our proposed code and repair mechanism is to offer a universal code, which provides a significant reduction in the repair bandwidth for multiple failures, without compromising the system performance.

The rest of this paper is organized as follows: For the sake of completeness, we first review  the achievable trade-off and the  construction of the determinant codes \cite{elyasi2016determinant} in Sections~\ref{sec:main} and Section~\ref{sec:construction}. The new encoding and decoding for the node repair are presented in Section~\ref{sec:repair}. An illustrative example is provided in Section~\ref{sec:example}, in which the core idea of data recovery and node repair are  demonstrated. The formal proofs of the properties of the proposed code are presented in Section~\ref{sec:nddproofs}.  Finally, in Section~\ref{sec:MulRep-improved} we discuss the improved repair-bandwidth for multiple failures in a centralized repair setting.

\section{Main Result}
\label{sec:main}
We start by introducing  a few symbols and notations, which are frequently used in this paper. 

\emph{Notation:} We use  $\intv{k+1:d}$ to denote the set of integer numbers $\{k+1,\dots, d\}$, and $\intv{k}=\intv{1:k}$ to represent the set $\set{1,2,...,k}$. For a set $\sen X$ and a member $x\in \sen X$, we define $\ind{\sen X}{x}=\size{\set{y\in \sen X: y\leq x}}$. We use boldface symbols to refer to matrices, and for a matrix $\mathbf{X}$, we denote its $i$-th row by $\mathbf{X}_i$ .   
 We also use the notation $\bX_{\bLozenge,j}$ to refer to the $j$-th column of $\bX$. 
Moreover, we use $\mathbf{X}[\sen A, \sen B]$ to denote a sub-matrix of $\mathbf{X}$ obtained by rows $i\in \sen A$ and columns $j\in \sen B$. Accordingly, $\mathbf{X}[\sen A, :]$ denotes the sub-matrix of $\mathbf{X}$ by stacking rows  $i\in \sen A$. 
Moreover, we may use sets to label rows and/or columns of a matrix, and hence $\mathbf{X}_{\sen I,\sen J}$ refers to an entry of matrix $\mathbf{X}$ at the row indexed by $\sen I$ and the column labeled by $\sen J$.
Finally, for a set $\sen I$, we denote the maximum entry of $\sen I$ by $\max \sen I$.

The optimum storage repair-bandwidth of the exact-repair regenerating codes for an $(n,k=d,d)$ system is a piece-wise linear function \cite{elyasi2016determinant, elyasi2016new}, which is fully characterized by its corner (intersection) points \cite{elyasi2015linear, duursma2015shortened,  prakash2015storage}. The determinant codes provide a universal construction for all corner points on the optimum trade-off curve. We assign a \emph{mode} (denoted by $m$) to each corner point, which is an integer in $\{1,2,\dots, d\}$ (from $1$ for MBR to $d$ for MSR point). The main distinction  between the result of this work and that of \cite{elyasi2016determinant, elyasi2016new} is the fact that the repair data sent by one helper node does not depend on the identity of all the other helper nodes participating the repair process. The following definition formalizes this distinction.

\begin{defi}
Consider the repair process of a failed node $f$ using a set of helper nodes $\cH$. The repair process is called \emph{helper-independent} if the repair data sent by each helper node $h\in \cH$ to the failed node $f$ only depends on $f$ and the content of node $h$ (but not the other helpers participating in the repair process). 
\end{defi}

The following theorem formally states the trade-off achievable by determinant codes. 

\begin{thm}
	For an $(n,k=d,d)$ distributed storage system and any mode $m=1,2,\dots, d$, the triple  $(\alpha,\beta,F)$ with
	\begin{align}
	\left( \alpha^{(m)}, \beta^{(m)}, F^{(m)} \right) =  \left( \binom{d}{m} 
	, \binom{d-1}{m-1}
	, m \binom{d+1}{m+1} \right)
	\label{eq:nddparam}
	\end{align} 
	 can be achieved under \emph{helper-independent exact repair} by the code construction proposed in this paper.
	 \label{thm:main}
\end{thm}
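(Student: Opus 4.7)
The plan is to split the claim into parts already established for determinant codes and the novel helper-independent repair. The storage $\alpha^{(m)}=\binom{d}{m}$, file size $F^{(m)}=m\binom{d+1}{m+1}$, and the data-reconstruction property from any $k=d$ nodes follow from the determinant-code construction of \cite{elyasi2016determinant} reviewed in Section~\ref{sec:construction}; no new argument is needed there. The core task is therefore to define, at each helper $h$, a repair-encoding map $\psi_{h\to f}$ of $W_h$ that (i) depends only on $W_h$ and the identity of the failed node $f$, (ii) outputs exactly $\beta^{(m)}=\binom{d-1}{m-1}$ symbols, and (iii) together with the maps from the other $d-1$ helpers suffices to reconstruct $W_f$ by simple linear combinations at $f$.

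For the construction, I would index the $\beta^{(m)}$ repair symbols at helper $h$ by $(m-1)$-subsets $\cI\subseteq \intv{d}\setminus\{h\}$, which immediately produces the count $\binom{d-1}{m-1}$. Each such symbol is to be a linear combination of the $\binom{d}{m}$ stored symbols of node $h$ (those indexed by $m$-subsets containing $h$), with coefficients depending only on $\cI$ and $f$. Every missing symbol of $W_f$ is indexed by an $m$-subset $\cS\ni f$, so I would arrange the coefficients so that the target symbol appears as a signed sum of the received repair symbols indexed by the subsets $\cS\setminus\{h\}$ across the helpers $h\in\cS\setminus\{f\}$, while contributions indexed by $m$-subsets not containing $f$ cancel in antisymmetric pairs. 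The alternating structure of the determinant encoding, manifested in the signs $\sgn{\cdot}$ appearing throughout the construction, is what should make this cancellation possible.

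The main obstacle is precisely this cancellation: because every helper must commit to its transmission without knowing the identities of the others, each term that is not a desired symbol of $W_f$ must cancel locally, in pairs, against a term produced by a different helper. Verifying this requires establishing a combinatorial identity that resembles a Laplace expansion---summing over the elements of an $m$-subset to yield the full target symbol while off-subset contributions cancel---and checking that the identity holds uniformly for every $\cS\ni f$ and every admissible helper set $\cH$. Once this identity is in hand, the decoder at $f$ reduces to a direct enumeration of signed additions, the three conditions (i)--(iii) above are met, and the claimed triple $(\alpha^{(m)},\beta^{(m)},F^{(m)})$ is achieved. The small example in Section~\ref{sec:example} is a natural place to sanity-check the cancellation pattern in a concrete case before committing to the general form of $\psi_{h\to f}$.
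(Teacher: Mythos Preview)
Your plan has a genuine structural gap: it silently restricts both the failed node $f$ and the helpers $h$ to the systematic range $\intv{d}$. Statements like ``the $\binom{d}{m}$ stored symbols of node $h$ (those indexed by $m$-subsets containing $h$)'', ``index the repair symbols at helper $h$ by $(m-1)$-subsets $\cI\subseteq\intv{d}\setminus\{h\}$'', and ``every missing symbol of $W_f$ is indexed by an $m$-subset $\cS\ni f$'' are all false in general. A node $h\in\intv{n}$ stores $\enc_h\cdot\SM$, whose $\alpha^{(m)}=\binom{d}{m}$ entries are indexed by \emph{all} $m$-subsets of $\intv{d}$, not just those containing $h$; when $h>d$ the phrase ``subsets of $\intv{d}\setminus\{h\}$'' collapses to all $(m-1)$-subsets of $\intv{d}$ and gives the wrong count $\binom{d}{m-1}$; and for a non-systematic failed node $f>d$ there is no sense in which its coded symbols are indexed by subsets containing $f$. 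Your pairwise-cancellation picture, where helper $h\in\cS\setminus\{f\}$ contributes directly to the symbol indexed by $\cS$, is exactly the systematic special case $\cH=\intv{d}$ worked out in the paper's example, not a general repair scheme.

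What is missing is the decoupling step that the paper uses to reduce an arbitrary helper set to the systematic picture. In the paper, every helper $h$ applies the \emph{same} repair-encoder matrix $\repMat{f}{m}$ (depending only on $f$) to its content, sending $\enc_h\cdot\SM\cdot\repMat{f}{m}$; the failed node then stacks these and multiplies by $\enc[\cH,:]^{-1}$ (using the MDS property of $\enc$) to obtain $\repSpc{f}{m}=\SM\cdot\repMat{f}{m}$, which is independent of $\cH$. Only after this inversion does the signed-sum identity \eqref{eq:repair} apply, and it recovers $[\enc_f\cdot\SM]_{\cI}$ for \emph{every} $m$-subset $\cI$ via the parity equations~\eqref{eq:parityeq}, not just those containing $f$. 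The bandwidth bound $\beta^{(m)}=\binom{d-1}{m-1}$ comes from a rank argument on $\repMat{f}{m}$ (Proposition~\ref{lm:beta}), where the choice of which $\binom{d-1}{m-1}$ columns to transmit depends on $f$ (through a coordinate $x$ with $\psi_{f,x}\neq 0$), not on $h$ as you propose. Without the common $\repMat{f}{m}$ and the inversion of $\enc[\cH,:]$, your additive scheme cannot handle non-systematic helpers, and the theorem as stated does not follow.
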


It is worth mentioning that this theorem and the achievable points on the trade-off curve are identical to those of \cite{elyasi2016determinant}. However, the novel repair process presented here has the advantage that the repair data sent by a helper node does not depend on the identity of other helpers participating in the repair process. Moreover, we present a repair mechanism for multiple simultaneous failures. The proposed scheme exploits the overlap between the repair data sent for different failed nodes and offers a reduced repair-bandwidth compared to naively repairing the failed nodes independent of each other.

The code construction is reviewed in Section~\ref{sec:construction} for completeness. In order to prove Theorem~\ref{thm:main}, it suffices to show that the proposed code satisfies the two fundamental properties, namely data recovery and exact node repair. The proof data recovery property is similar to that of \cite[Proposition~1]{elyasi2016determinant}, and hence omitted here. The exact-repair property is formally stated in  Proposition~\ref{prop:node:rep}, and proved in Section~\ref{sec:nddproofs}. Moreover, Proposition~\ref{lm:beta} shows that the repair bandwidth of the proposed code does not exceed $\beta^{(m)}$. This is also proved in Section~\ref{sec:nddproofs}.

In Fig.~\ref{fig:tradeoff}  the linear trade-off  for a system $d=4$ together with achievable corner points of this paper are depicted.

\begin{figure}[h!]
	\centering
	\begin{tikzpicture}
	\node[anchor=north east,inner sep=0] (image) at (0,0) {\includegraphics[width=0.4\linewidth]{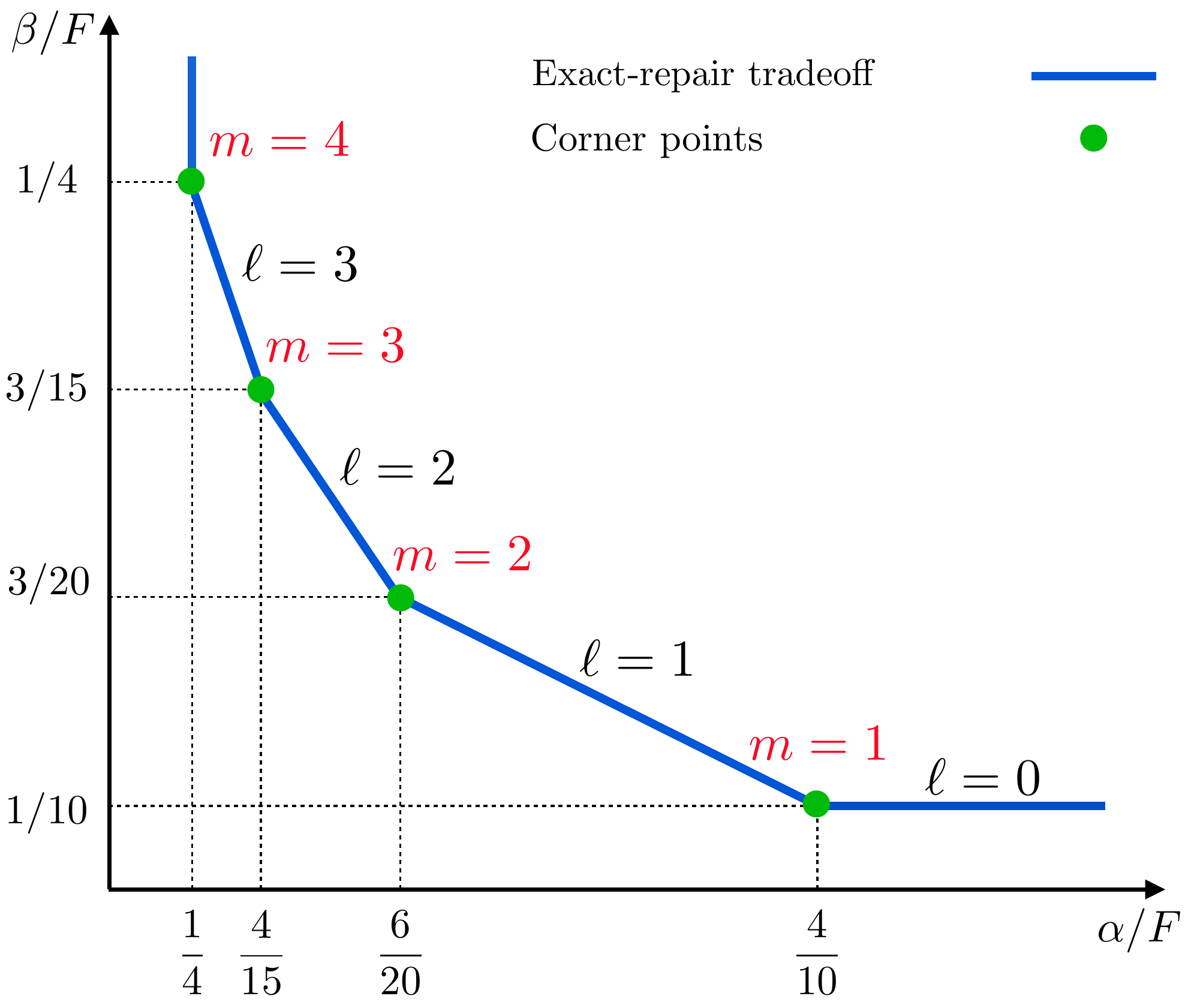}};
	\begin{scope}[x={(image.north west)},y={(image.south east)}]
	\node[fill = white, outer sep=0pt, inner sep=0pt,align = left] (rect) at (0.16,0.072) { 
		{\hspace{-1pt}\small \cite{elyasi2015linear,prakash2015storage,duursma2015shortened}}
	};	
	\end{scope}
	\end{tikzpicture}
	\vspace{-4pt}
	\caption{The optimum linear trade-off (from \cite{elyasi2016determinant}) between the normalized node-storage $\alpha/F$ and the normalized repair-bandwidth $\beta/F$ for a $(5,4,4)$-DSS given in \eqref{eq:lin-bound}. The coordinates of the corner points are given by \eqref{eq:cp}.}
	\label{fig:tradeoff}
	\vspace{-3pt}
\end{figure}

\begin{remark}
Theorem~\ref{thm:main} offers an achievable trade-off for the normalized parameters $(\alpha/F, \beta/F)$ given by
\begin{align}
	\left(\frac{\alpha^{(m)}}{F^{(m)}},\frac{\beta^{(m)}}{F^{(m)}}\right)&=\left( \binom{d}{m}/m\binom{d+1}{m+1}, \binom{d-1}{m-1}/m\binom{d+1}{m+1}\right)
	= \left(\frac{m+1}{m(d+1)}, \frac{m+1}{d(d+1)}\right). 	
	\label{eq:cp}
	\end{align} 

It is shown in \cite{elyasi2015linear,prakash2015storage,duursma2015shortened} that for any \emph{linear} exact-repair regenerating code with parameters $(n,k=d,d)$ that is capable of storing $F$ symbols, $(\alpha, \beta)$ should satisfy
	\begin{align}
	F\leq \frac{d+1}{\ell+2}\left(\ell\alpha + \frac{d}{\ell+1} \beta\right),
	\label{eq:lin-bound}
	\end{align}
	where $\ell=\lfloor d\beta/\alpha\rfloor$ takes values in $\{0,1,\dots, d\}$. This establishes a piece-wise linear lower bound curve, with $d$ (normalized) corner points obtained at integer values of $\ell = d\beta/\alpha$. For these corner points, the (normalized) operating points $(\alpha/F, \beta/F)$ are given 
	\begin{align*}
	\left(\frac{\alpha}{F}, \frac{\beta}{F}\right) = \left(\frac{\ell+1}{\ell(d+1)}, \frac{\ell+1}{d(d+1)}\right).
	\end{align*}		
	These operating points are matching with the achievable (normalized) pairs given in \eqref{eq:cp}. 
	Therefore, determinant codes are optimal, and together with the lower bound of 
	\cite{elyasi2015linear,prakash2015storage,duursma2015shortened} fully characterize the optimum trade-off for exact-repair regenerating codes with parameters $(n,k=d,d)$. 
\end{remark}
	

The next result of this paper provides an achievable bandwidth for  multiple repairs.  
\begin{thm}
In an $(n,k=d,d)$ determinant codes operating at mode $m$, the content of any set of  $e$ simultaneously failed nodes can be \emph{exactly} repaired by accessing an arbitrary set of $d$ nodes and downloading  
\begin{align*}
\beta_e^{(m)} = \binom{d}{m}-\binom{d-e}{m}
\end{align*}
repair symbols from each helper node. 
\label{thm:MulRep}
\end{thm}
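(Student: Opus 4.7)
The plan is to construct an explicit helper-independent repair scheme for $e$ simultaneous failures that reuses and combines the single-failure repair data guaranteed by Proposition~\ref{prop:node:rep}, and to bound its per-helper bandwidth by a combinatorial argument. The starting observation is the identity
\begin{align*}
\binom{d}{m} - \binom{d-e}{m} = \sum_{i=1}^{e} \binom{d-i}{m-1},
\end{align*}
which follows from Pascal's rule by telescoping and, crucially, counts the $m$-subsets of $[d]$ having non-empty intersection with a fixed $e$-subset. This number will play the role of the dimension of the union of single-failure repair subspaces seen at each helper.

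Let $\cR_h(f)$ denote the subspace of helper $h$'s content that constitutes its single-failure, helper-independent repair data for $f$; by Theorem~\ref{thm:main} this subspace has dimension $\binom{d-1}{m-1}=\beta^{(m)}$ and depends only on $h$ and $f$. For a set of failed nodes $\{f_1,\dots,f_e\}$, I would have each helper $h$ transmit a basis of $\cR_h(f_1)+\cdots+\cR_h(f_e)$. Helper-independence is immediate since each summand depends only on $h$ and one $f_i$. Correctness is then established by a sequential decoding argument at the data center, exactly as sketched in the introduction: since $\cR_h(f_1)$ lies in the transmitted data, $f_1$ can be reconstructed using the single-failure repair of Proposition~\ref{prop:node:rep}; once reconstructed, $f_1$ is treated as a virtual helper whose full content is known to the data center, and $f_2$ is then repaired using $f_1$ together with any $d-1$ original helpers, whose needed contribution lies in $\cR_h(f_2)$ and hence was already transmitted. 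Iterating, the $i$-th failure is repaired from $f_1,\dots,f_{i-1}$ together with $d-i+1$ original helpers.

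To conclude, I would show that the transmitted basis at each helper has size at most $\binom{d}{m}-\binom{d-e}{m}$ by induction on $e$, proving that adding the $i$-th summand $\cR_h(f_i)$ to the running sum $\cR_h(f_1)+\cdots+\cR_h(f_{i-1})$ increases its dimension by exactly $\binom{d-i}{m-1}$. Summing these increments and applying the identity above yields the claimed bandwidth $\beta_e^{(m)}$. The incremental bound is consistent with the sequential decoding, because at step $i$ only $d-i+1$ original helpers are still needed, so the new data each of them must contribute matches the additional dimension their subspace $\cR_h(f_i)$ brings on top of the previously transmitted span.

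The main obstacle is establishing this exact incremental dimension count, which depends on the precise structure of the repair subspaces $\cR_h(f)$ produced by the construction of Section~\ref{sec:construction}. I expect the proof to proceed by identifying $\cR_h(f)$ with the span of rows of $h$'s content indexed by $m$-subsets of $[d]$ that contain a specific index attached to $f$ in the helper-independent repair scheme, so that $\cR_h(f_1)+\cdots+\cR_h(f_e)$ is naturally identified with $m$-subsets meeting the corresponding $e$-subset. Once this combinatorial labelling is made explicit, linear independence across the increments should reduce to the linear independence properties already used to establish Proposition~\ref{prop:node:rep} for single failures, and the dimension identity follows.
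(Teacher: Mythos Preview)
Your correctness argument via sequential decoding is more elaborate than necessary. Since each helper $h$ transmits a basis of $\cR_h(f_1)+\cdots+\cR_h(f_e)$, the single-failure repair data $\enc_h\cdot\SM\cdot\repMat{f_i}{m}$ for each $f_i$ is recoverable at the receiving end for \emph{all} $d$ original helpers simultaneously, so each $f_i$ can be repaired directly from the $d$ original helpers via Proposition~\ref{prop:node:rep} without ever invoking already-repaired nodes as virtual helpers. The sequential scheme you describe is the mechanism behind Theorem~\ref{thm:MulRep-improved}, not Theorem~\ref{thm:MulRep}.

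The genuine gap is in your bandwidth argument. Your plan hinges on identifying $\cR_h(f)$ with the coordinate subspace spanned by $m$-subsets of $[d]$ containing ``a specific index attached to $f$.'' That identification is correct only when $f$ is a \emph{systematic} node: if $\enc_f$ is the $f$-th standard basis vector, then indeed every column of $\repMat{f}{m}$ is supported on rows $\sen I$ with $f\in\sen I$, and the incremental count you propose goes through by pure combinatorics. But the theorem concerns arbitrary failed nodes, and for non-systematic $f$ the row $\enc_f=(\psi_{f,1},\dots,\psi_{f,d})$ has several nonzero entries, so the column space of $\repMat{f}{m}$ is \emph{not} a coordinate subspace and depends on all of $\psi_{f,1},\dots,\psi_{f,d}$. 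There is no single index attached to $f$, and the overlap $\cR_h(f_i)\cap\big(\cR_h(f_1)+\cdots+\cR_h(f_{i-1})\big)$ depends on the specific encoder rows, not just on set incidences. Consequently your inductive increment $\binom{d-i}{m-1}$ does not follow from the labeling you propose, nor from the single-failure rank bound of Proposition~\ref{lm:beta}.

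The paper takes a different route to the rank bound: it constructs an explicit $\binom{d-e}{m}\times\binom{d}{m}$ matrix $\NSpc{\cE}{m}$ whose entries are signed determinants of $e\times e$ minors of $\enc[\cE,:]$, shows it is full-rank by exhibiting a diagonal submatrix with entries $\pm\det\enc[\cE,\sen Q]\neq 0$, and proves $\NSpc{\cE}{m}\cdot\repMat{f}{m}=\mathbf{0}$ for every $f\in\cE$ by recognizing each entry of the product as the Laplace expansion of an $(e{+}1)\times(e{+}1)$ determinant with a repeated row. This handles all failed nodes uniformly and is where the MDS property of $\enc$ on the failed rows actually enters.
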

The repair mechanism for multiple failures is similar to that of single failure presented in Proposition~\ref{prop:node:rep}. In order to prove Theorem~\ref{thm:MulRep}, it suffices to show that the repair bandwidth required for multiple failures does not exceed $\beta_e^{(m)}$. This is formally stated in Proposition~\ref{prop:multi-beta} and proved in Section~\ref{sec:nddproofs}. 

\begin{figure}
\begin{center}
\scalebox{0.5}{
\begin{tikzpicture}
\begin{axis}[
height=13cm,
width = 17cm,
legend style={font=\large,at={(0.45,0.25)},anchor=south west},
xmin=0.5,
 xmax=10.5,
domain=0:10,
ymin=0.25,
ymax=1.02,
mark options={scale=0.73},
grid=both,
grid style={line width=.1pt, draw=gray!30},
major grid style={line width=.2pt,draw=gray!50},
axis lines=middle,
label style={font=\LARGE},
tick label style={font=\large},
legend cell align={left},
ticklabel style={fill=white},
tension =0,
xlabel=$e$,
yticklabel style={ scaled ticks=false,
	/pgf/number format/fixed,
	/pgf/number format/precision=4},
extra y ticks={0.825},
xtick={1,2,...,10},
ylabel=$\frac{\beta_e}{\alpha}$]
\addplot[very thick,smooth,color=red,mark=square*]
plot coordinates {
(1,3/10)
(2,3/5)
(3,9/10)
(4,1)
(5,1)
(6,1)
(7,1)
(8,1)
(9,1)
(10,1)
};
\addlegendentry{Naive Multiple Repairs}
\addplot[very thick,smooth,mark=*,blue] plot coordinates {
(1,3/10)
(2,8/15)
(3,17/24)
(4,5/6)
(5,11/12)
(6,29/30)
(7,119/120)
(8,1)
(9,1)
(10,1)
};
\addlegendentry{Multiple Repairs using Theorem~\ref{thm:MulRep}}
\addplot[very thick,smooth,mark=*,darkgreen] plot coordinates {
(1,3/10)
(2,51/100)
(3,13/20)
(4,59/80)
(5,63/80)
(6,13/16)
(7,329/400)
(8,33/40)
(9,33/40)
(10,33/40)
};
\addlegendentry{Multiple Repairs using Theorem~\ref{thm:MulRep-improved}}
\addplot[very thick,mark=none, color={rgb:red,200;green,10;blue,10},thick, densely dashed] {(3*11)/(10*4)};
\addlegendentry{Saturation of $\bar{\beta_e}/\alpha$ regardless of $e$}
\end{axis}
\end{tikzpicture}
}
\end{center}
\caption{Normalized repair bandwidth (by per-node storage size) for multiple failures with $e$ failed nodes, for an $(n,10,10)$ determinant code operating at mode $m=3$, i.e, $(\alpha,\beta, F)=(120,36, 990)$.}
\label{fig:MultipleRepair}
\end{figure}
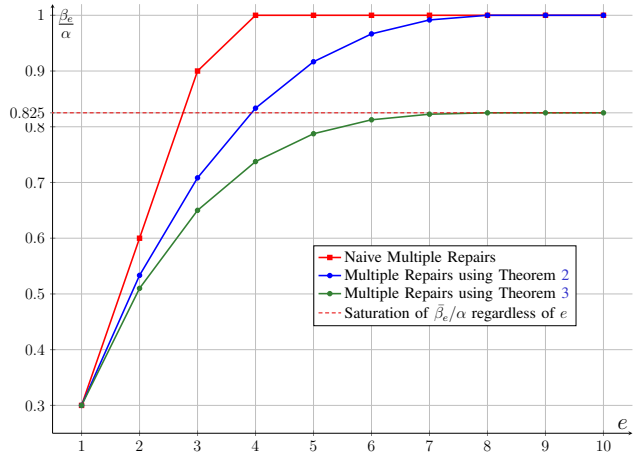

\begin{remark}
Note that the repair bandwidth proposed for multiple repairs in Theorem~\ref{thm:MulRep} subsumes the one in Theorem~\ref{thm:main} for single failure for setting $e=1$:
\begin{align*}
\beta_1^{(m)} = \binom{d}{m}-\binom{d-1}{m} = \binom{d-1}{m-1} = \beta^{(m)}. 
\end{align*}
\end{remark}

\begin{remark}
It is worth mentioning that the repair-bandwidth proposed in Theorem~\ref{thm:MulRep} is universally and simultaneously achievable. That is, the same determinant code can simultaneously achieve $\beta_e^{(m)}$ for every $e\in\{1,2,\dots, n-d\}$. 
\end{remark}

The next theorem shows that the repair bandwidth for multiple failures can be further reduced in the centralized repair setting \cite{ye2017explicit, zorgui2017centralized}, by a sequential repair mechanism, and exploiting the repair symbols contributed by already repaired failed nodes which can act as helpers. 

\begin{thm}
In an $(n,k=d,d)$ determinant code with (up to a scalar factor) parameters  
$\alpha^{(m)} = \binom{d}{m}$ and $F^{(m)}=m\binom{d+1}{m+1}$,	
any set of $e$ simultaneously failed nodes can be centrally repaired by accessing an arbitrary set of $d$ helper nodes and downloading a total of 
	\begin{align}
 \bar{\beta}_e^{(m)} = \frac{1}{d} \left[m\binom{d+1}{m+1}-m\binom{d-e+1}{m+1}\right]
\end{align}
repair symbols from each helper node. 
\label{thm:MulRep-improved}
\end{thm}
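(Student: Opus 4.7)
The plan is to combine the sequential-repair idea sketched in the introduction with the overlap analysis behind Theorem~\ref{thm:MulRep}, and to verify that the total number of symbols downloaded from the $d$ chosen helpers is $d\cdot\bar{\beta}_e^{(m)}$. Fix an arbitrary ordering $f_1,\ldots,f_e$ of the failed nodes and an arbitrary ordering $h_1,\ldots,h_d$ of the helpers in $\cH$, and repair the failed nodes one at a time in index order. At step $i$, regenerate $f_i$ by treating $\{h_1,\ldots,h_{d-i+1}\}\cup\{f_1,\ldots,f_{i-1}\}$ as its $d$ helpers and invoking the helper-independent single-failure rule of Proposition~\ref{prop:node:rep}. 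Since the data center already holds the exact contents of $f_1,\ldots,f_{i-1}$ from the previous steps, the $\beta^{(m)}$ symbols that each of these already-repaired nodes would have sent as a helper for $f_i$ can be synthesized locally at no bandwidth cost from $\cH$; only the $d-i+1$ fresh helpers $h_1,\ldots,h_{d-i+1}$ actually transmit symbols at step~$i$.

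To count the bandwidth from each original helper, observe that $h_j$ participates in exactly those steps $i$ with $j\le d-i+1$, and so serves the $L_j:=\min(e,\,d-j+1)$ failed nodes $f_1,\ldots,f_{L_j}$. By the helper-independent property, the repair data from $h_j$ to failed node $f_i$ lies in a fixed $\beta^{(m)}$-dimensional subspace $\cR_{h_j}(f_i)$ of $h_j$'s content, depending only on $h_j$ and $f_i$. The key claim is
\begin{align*}
\dim\Bigl(\sum_{i=1}^{L_j}\cR_{h_j}(f_i)\Bigr)\le \binom{d}{m}-\binom{d-L_j}{m}=\beta_{L_j}^{(m)}.
\end{align*}
This is precisely the bound that Proposition~\ref{prop:multi-beta} establishes for $L_j=e$ in the proof of Theorem~\ref{thm:MulRep}, and since that overlap count depends only on the number of failed nodes served by $h_j$ (not on their identity nor on which other helpers participate), it specializes to any $L_j\le e$. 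Consequently, the data center can collect every repair symbol it will ever need from $h_j$ by downloading a single $\beta_{L_j}^{(m)}$-element basis of this subspace at the first step in which $h_j$ is invoked.

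Summing over helpers, and using that $L_j=e$ for $j\le d-e+1$ while $L_j$ takes each value in $\{1,\ldots,e-1\}$ exactly once as $j$ ranges over $\{d-e+2,\ldots,d\}$, the total download from $\cH$ is
\begin{align*}
\sum_{j=1}^{d}\beta_{L_j}^{(m)}=(d-e+1)\,\beta_e^{(m)}+\sum_{k=1}^{e-1}\beta_k^{(m)}.
\end{align*}
After substituting $\beta_k^{(m)}=\binom{d}{m}-\binom{d-k}{m}$, evaluating $\sum_{k=1}^{e-1}\binom{d-k}{m}=\binom{d}{m+1}-\binom{d-e+1}{m+1}$ via the hockey-stick identity, and then using $(n+1)\binom{n}{m}=(m+1)\binom{n+1}{m+1}$ together with Pascal's rule, this sum collapses to $m\binom{d+1}{m+1}-m\binom{d-e+1}{m+1}=d\cdot\bar{\beta}_e^{(m)}$, which is the desired total bandwidth.

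The main obstacle I anticipate is the intermediate dimensional bound on $\sum_{i\le L_j}\cR_{h_j}(f_i)$. Theorem~\ref{thm:MulRep} is stated in the symmetric setting where every one of the $d$ helpers serves the full set of $e$ failures, so some care is needed to verify that the combinatorial-algebraic count in Proposition~\ref{prop:multi-beta} is genuinely insensitive to the identities of the co-helpers and of the specific $L_j$ failed nodes served, so that it applies verbatim to each original helper with its own reduced load. Once that uniformity is in place, the rest of the proof is the binomial identity verified above.
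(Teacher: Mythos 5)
Your proposal is correct and follows essentially the same route as the paper: repair the failed nodes sequentially, let already-repaired nodes stand in as helpers for the later steps (at zero download cost to the data center), bound the load of a helper that serves $L$ failed nodes by $\beta_L^{(m)}$ via Proposition~\ref{prop:multi-beta}, and collapse the resulting sum with the hockey-stick and Pascal identities. The obstacle you flag at the end is not an actual obstacle: Proposition~\ref{prop:multi-beta} is a statement about the rank of the concatenated matrix $\repFam{\sen E}{m}$, which depends only on the rows $\enc_{f_1},\dots,\enc_{f_e}$ of the encoder matrix (no helper identities enter its definition or its proof), so it applies verbatim with $\sen E$ replaced by $\{f_1,\dots,f_{L_j}\}$ for any $L_j\leq d$. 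Your choice to retire the \emph{last} helpers rather than the first is an immaterial relabeling, and your count $(d-e+1)\beta_e^{(m)}+\sum_{k=1}^{e-1}\beta_k^{(m)}$ agrees with the paper's $(d-e)\beta_e^{(m)}+\sum_{j=1}^{e}\beta_j^{(m)}$.

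One piece is missing. The theorem asserts that $\bar\beta_e^{(m)}$ symbols are downloaded \emph{from each} helper node, whereas your argument (like the raw sequential scheme in the paper) only establishes that the \emph{total} download is $d\cdot\bar\beta_e^{(m)}$; the individual loads $\beta_{L_j}^{(m)}$ are unequal. The paper closes this by concatenating $d$ copies of the code into a super-code $\enc\cdot[\SM^{[1]},\dots,\SM^{[d]}]$ and cyclically rotating the role assignment of the helpers across the $d$ segments, so that every helper's load averages out to exactly $d\cdot\bar\beta_e^{(m)}$ for the super-code; this sub-packetization by a factor of $d$ is what the qualifier ``up to a scalar factor'' in the theorem statement refers to. You should add this symmetrization step (or explicitly restate the claim as a bound on the total, rather than per-helper, bandwidth).
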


\begin{remark}
It is worth noting that for $e>d-m$ we have $\bar{\beta}_e^{(m)} =\frac{m}{d}\binom{d+1}{m+1} = F^{(m)}/d$ (independent of $e$), and 
\[
\frac{\bar{\beta}_e^{(m)}}{\alpha^{(m)}} = \frac{m(d+1)}{d(m+1)},
\]
which is strictly less than $1$ as shown in Fig.~\ref{fig:MultipleRepair} (for all corner points except the MSR point, $m=d$). The fact that $\bar{\beta}_e^{(m)} = F^{(m)}/d$ implies that the helper nodes contribute just enough number of repair symbols  to be able to recover the entire file, without sending any redundant data. It is clear that this repair-bandwidth is optimum for $e\geq d$, since such a set of $e$ failed nodes should be able to recover the entire file after being repaired. 
\end{remark}

This theorem is built on the result of Theorem~\ref{thm:MulRep}, by exploiting the repair data can be exchanged among the failed nodes. Note that in the centralized repair setting, the information exchanged among the failed nodes at the repair center are not counted against the repair bandwidth. We prove this theorem in Section~\ref{sec:MulRep-improved}.

\section{Construction of $(n,k=d,d)$ determinant codes}
\label{sec:construction}
The code construction described in this section is identical to that of \cite{elyasi2016determinant}, except the repair process which is different and simpler. However, for the sake of completeness, we start with the details of code construction.
\subsection{Code Construction}
For a distributed storage system with parameters $(n,k=d,d)$ and corresponding to a \emph{mode} $m\in\set{1,2,\dots, d}$, our construction provides an exact-repair regenerating code with per-node storage capacity  $\alpha^{(m)} = \binom{d}{m}$ and per-node repair-bandwidth  $\beta ^{(m)}= \binom{d-1}{m-1}$. This code can store up to $F^{(m)}= m\binom{d+1}{m+1}$ symbols.

We represent the coded symbols in a matrix $\cC_{n \times \alpha}$, in which the $i$-th row corresponds to the encoded data to be stored in $i$-th node of DSS. The proposed code is \emph{linear}, i.e., the encoded matrix $\cC$  is obtained by multiplying an encoder matrix $\enc_{n \times d}$ and a message matrix\footnote{The number of entries in this matrix is more than $F$, the size of the file to be coded. Indeed, there are some redundancies among the entries of this matrix as will be explained later.  $\SM_{d \times \alpha}$, whose construction  will be explained later}. All entries of the encoder matrix and the message matrix\footnote{In general elements of the message matrix can be chosen from a Galois field  $\mathbb{F} = \mathsf{GF}(q^s)$ for some prime number $q$ and an integer $s$.} are assumed to be from a finite field $\mathbb{F}$, which has at least $n$ distinct elements. Moreover, all the arithmetic operations are performed with respect to the underlying finite field. The structures of the encoder and message matrices are given below.

\underline{\textbf{Encoder Matrix:}} The matrix $\enc_{n \times d}$ is a fixed matrix which is shared among all the nodes in the system. The main property required for matrix $\enc$ is being Maximum-Distance-Separable (MDS), that is, any  $d\times d$ sub-matrix  of $\enc_{n \times d}$ is full-rank. Examples of MDS matrices include Vandermonde or Cauchy matrices. We can always convert an MDS  matrix to a systematic MDS matrix, by multiplying it by the inverse of its top $d\times d$ sub-matrix (see the example in Section~\ref{sec:example}). We refer to the first $k$ nodes by \emph{systematic} nodes if a systematic MDS matrix is used for encoding.

\underline{\textbf{Message Matrix:}} The message matrix $\SM$ is filled with raw (source) symbols and parity symbols. Recall that $\SM$  is a $d\times \alpha$ matrix, that has $d\alpha = d\binom{d}{m}$ entries, while we wish to store only $F=m\binom{d+1}{m+1}$ source symbols. Hence, there are $d\alpha-F= \binom{d}{m+1}$ redundant entries in $\SM$, which are filled with  parity symbols. More precisely, we divide the set of $F$ data symbols into two groups, namely, $ \sen V$ and $\sen W$, whose elements are indexed by sets as follows
\begin{align*}
\sen V  &= \left\{v_{x,\sen I}:{\sen I} \subseteq \intv d,\size {\sen  I} = m , x\in \sen I \right\},\\
\sen W &=	\left\{w_{x,\sen I}:{\sen I} \subseteq \intv d,\size {\sen  I} =m+1, x\in {\sen I},\ind{\sen I}{x} \leq m\right\}.
\end{align*}	
Note that each element of $\sen V$ is  indexed by a set ${\sen  I} \subseteq \intv d$ of length $|{\sen I}|=m$ and an integer number $x\in {\sen I}$. Hence, $\size{\sen V}=m\binom{d}{m}$. Similarly, symbols in  $\sen W$ are indexed by a pair $(x, \sen I)$, where $\sen I$ is a subset of $\intv{d}$ with $m+1$ entries, and $x$  can take any value in $\sen I$ except the largest one. So, there are $\size {\sen W} = m\binom{d}{m+1}$ symbols in set $\sen W$. Note that $F=|\sen V| + |\sen W|$. 

For the sake of completeness, we define parity symbols indexed by pairs $(x, \sen I)$, where $\size{\sen I}=m+1$ and $\ind{\sen I}{x}=m+1$. Such symbols are constructed such that  parity equations  
\begin{align}
\sum_{y \in \sen I} (-1)^{\ind{\sen I}{y}} w_{y, \sen I} = 0,
\label{eq:parityeq}
\end{align}
hold for any $\sen I \subseteq \intv{d}$ with $|\sen I|=m+1$. In other words, such missing symbols are given by\footnote{Note that for an underlying Galois field $\mathsf{GF}(2^s)$ with characteristic $2$, the parity equation reduces to $\sum_{y \in \sen I} w_{y,\sen I} =0$.} $(-1)^{m+1}w_{\max \sen I, \sen I}=-\sum_{y \in \sen I\setminus \set{\max \sen I}} (-1)^{\ind{\sen I}{y}} w_{y, \sen I} $. 

The rows of matrix $\SM$ are labeled by numbers $1, 2, \dots, d$, and the columns are labeled by subsets $\sen I$ of $\intv{d}$ of size  $|\sen I| = m$. The entries of matrix $\SM$ are given by
\begin{align}
\SM_{x,\sen I}=\left\{
\begin{array}{l l}      
v_{x,\sen I} & \textrm{if $x \in \sen I$}, \\
w_{x,\sen{I}\cup \set {x}} & \textrm{if $x \notin \sen I$}. 
\end{array}\right.
\label{eq:def:S}
\end{align}
It is shown in \cite[Proposition~1]{elyasi2016determinant} that the entire data encoded by this code can be recovered from the content of any $k=d$ nodes. 
Next, we show the exact-repair properties for single and multiple failures.


\subsection{Single Failure Exact Repair}
\label{sec:repair}
The second important property of the proposed code is its ability to exactly repair the content of a failed node using the repair data sent by the helper nodes. Let node $f\in \intv{n}$ fails, and a set of helper nodes $\mathcal{H}\subseteq \{1,2,\dots,n\}\setminus\{f\}$ with $|\mathcal{H}|=d$ wishes to repair node $f$. We first determine the repair data sent from each helper node in order to  repair node $f$. 

\noindent \underline{\textbf{Repair Encoder Matrix at the Helper Nodes:}}
	For a determinant code operating in mode $m$ and a failed node $f$, the repair-encoder matrix $\repMat{f}{m}$ is defined as a $\binom{d}{m} \times \binom{d}{m-1}$ matrix, whose rows are labeled by $m$-element subsets of $\intv{d}$ and columns are labeled by $(m-1)$-element subsets of $\intv{d}$. The entry in  row $\sen I$ and  column $\sen J$  is given by
	\begin{align}
	\repMat{f}{m}_{\sen I, \sen J}=\left\{
	\begin{array}{l l}      
	(-1)^{\ind{\sen I}{x}}\psi_{f,x} & \textrm{if $\sen I \cup \set{x} =  \sen J$},\\
	0& \textrm{otherwise},
	\end{array}\right.
	\label{eq:def:xi}
	\end{align}
where $\psi_{f,x}$ is the entry of the encoder matrix $\enc$ at position $(f,x)$. An example of the $\Xi$ matrix is given in \eqref{eq:ex:Xi} in Section~\ref{sec:example}.

In order to repair node $f$, each helper node $h\in \sen H$  multiplies its content $\enc_h  \cdot \SM$ by the repair-encoder matrix of node $f$ to obtain $\enc_h \cdot \SM \cdot \repMat{f}{m}$, and sends it to node $f$. Note that matrix $\repMat{f}{m}$ has $\binom{d}{m-1}$ columns, and hence the length of the repair data $\enc_h \cdot \SM \cdot \repMat{f}{m}$ is $\binom{d}{m-1}$, which is greater than $\beta=\binom{d-1}{m-1}$. However, the following proposition states that out of $\binom{d}{m-1}$ columns of matrix $\repMat{f}{m}$  at most $\beta^{(m)}=\binom{d-1}{m-1}$ are linearly independent. Thus, the entire vector $\enc_h \cdot \SM \cdot \repMat{f}{m}$ can be sent by communicating at most $\beta$ symbols (corresponding to the linearly independent columns of $\repMat{f}{m}$) to the failed node, and other symbols can be reconstructed using the linear dependencies among the repair symbols.  This is formally stated in the following proposition, which is proved in Section~\ref{sec:nddproofs}.
\begin{prop}
	The rank of matrix $\repMat{f}{m}$ is at most $\beta^{(m)}=\binom{d-1}{m-1}$.
	\label{lm:beta}
\end{prop}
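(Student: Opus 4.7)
The approach is to identify $\repMat{f}{m}$, up to an overall sign, with the matrix of the \emph{wedge-with-$v$} operator on exterior powers of $\mathbb{F}^d$, where $v := (\psi_{f,1},\dots,\psi_{f,d}) \in \mathbb{F}^d$ is the $f$-th row of $\enc$. Identify $\mathbb{F}^{\binom{d}{k}}$ with $\bigwedge^k \mathbb{F}^d$ via $e_\sen I \leftrightarrow e_{i_1} \wedge \cdots \wedge e_{i_k}$ for $\sen I = \{i_1 < \cdots < i_k\} \subseteq \intv{d}$, and consider the $\mathbb{F}$-linear map $T_v : \bigwedge^{m-1} \mathbb{F}^d \to \bigwedge^{m} \mathbb{F}^d$ defined by $T_v(\omega) = v \wedge \omega$. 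Expanding $T_v(e_\sen J) = \sum_x \psi_{f,x}\, (e_x \wedge e_\sen J)$ and noting that inserting $e_x$ into its ordered position inside $e_{\sen J \cup \{x\}}$ contributes the sign $(-1)^{\ind{\sen J \cup \{x\}}{x} - 1}$, one sees that the matrix of $T_v$ coincides entrywise with $\repMat{f}{m}$ up to a global factor of $-1$, so the two matrices have the same rank.

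The next step is to compute the rank of $T_v$. The MDS property of $\enc$ forces every row of $\enc$ to be nonzero (otherwise any $d\times d$ submatrix containing that row would be singular), so $v \neq 0$. Extend $v$ to a basis $e'_1 = v,\, e'_2,\,\dots,\,e'_d$ of $\mathbb{F}^d$, and pass to the corresponding basis $\{e'_\sen J\}$ on exterior powers. In this new basis, $T_v$ sends $e'_\sen J \mapsto e'_1 \wedge e'_\sen J$, which vanishes whenever $1 \in \sen J$ and equals $\pm e'_{\{1\}\cup\sen J}$ otherwise. Consequently the image of $T_v$ is spanned by the $\binom{d-1}{m-1}$ vectors $e'_{\{1\}\cup\sen J}$ with $\sen J \subseteq \{2,\dots,d\}$ of size $m-1$, yielding $\mathrm{rank}(\repMat{f}{m}) = \mathrm{rank}(T_v) \leq \binom{d-1}{m-1} = \beta^{(m)}$.

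The only real obstacle is the sign bookkeeping needed to identify $\repMat{f}{m}$ with the wedge-product matrix up to a global sign; once this correspondence is in place, the rank falls out immediately from the basis change. A self-contained alternative that avoids exterior algebra is to exhibit explicit column relations: for each $(m-2)$-subset $\sen K \subseteq \intv{d}$, the combination $\sum_{x \notin \sen K} (-1)^{\ind{\sen K \cup \{x\}}{x}} \psi_{f,x}\, \repMat{f}{m}_{\bLozenge,\sen K \cup \{x\}}$ equals zero, because any row $\sen I$ with $\sen K \subset \sen I$ and $|\sen I \setminus \sen K| = 2$ receives exactly two contributions whose signs cancel by a direct check of the $\ind{\cdot}{\cdot}$ exponents. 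As a bonus, the argument actually yields the stronger statement that the rank equals $\binom{d-1}{m-1}$ exactly, so the helper nodes send precisely $\beta^{(m)}$ linearly independent components of the repair data.
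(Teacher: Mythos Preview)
Your proof is correct. Your main argument, identifying $\repMat{f}{m}$ (up to a global sign) with the matrix of the wedge map $T_v:\bigwedge^{m-1}\mathbb{F}^d\to\bigwedge^{m}\mathbb{F}^d$, $\omega\mapsto v\wedge\omega$, and then computing the rank after a basis change with $e'_1=v$, is a genuinely different route from the paper's. The paper works purely combinatorially: it fixes some $x$ with $\psi_{f,x}\neq 0$, partitions the columns of $\repMat{f}{m}$ according to whether $x$ lies in the column label, and then shows that every column in the second group is a linear combination of the $\binom{d-1}{m-1}$ columns in the first group via the relation
\[
\sum_{y\in\intv{d}\setminus\sen J}(-1)^{\ind{\sen J\cup\{y\}}{y}}\psi_{f,y}\,\repMat{f}{m}_{\bLozenge,\sen J\cup\{y\}}=0,\qquad |\sen J|=m-2,
\]
verified by a direct sign cancellation on each row. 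This is precisely the ``self-contained alternative'' you sketch at the end, so that part of your proposal coincides with the paper's argument. What your exterior-algebra approach buys is conceptual clarity and, as you note, the exact rank $\binom{d-1}{m-1}$ for free; what the paper's elementary approach buys is an explicit identification of which $\beta^{(m)}$ columns form a basis of the column space (those $\sen J$ with $x\notin\sen J$), which is operationally useful since it tells the helper exactly which entries of $\enc_h\cdot\SM\cdot\repMat{f}{m}$ to transmit.
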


\noindent \underline{\textbf{Decoding at the Failed Node:}}
Upon receiving $d$ repair-data vectors  $\set{\enc_h\cdot \SM \cdot \repMat{f}{m}:h \in \sen H}$, the failed node  stacks them to form a matrix  $\enc[\hlp,:] \cdot \SM \cdot \repMat{f}{m}$, where $\enc[\hlp,:]$ in the sub-matrix of $\enc$ obtained from nodes  $h\in\cH$. This matrix is  full-rank  by the definition of the $\Psi$ matrix.  Multiplying by $\enc[\hlp,:]^{-1}$, the failed node retrieves 
\begin{align}
\repSpc{f}{m}= \SM \cdot \repMat{f}{m}.
\end{align}
This is a $d\times \binom{d}{m-1}$ matrix. These $d \binom{d}{m-1}$ linear combinations of the data symbols span a linear subspace, which we refer to by \emph{repair space of node $f$}. The following proposition shows that all of the missing symbols of node $f$ can be recovered from its repair space.

 \begin{prop}
\label{prop:node:rep} 	
In the $(n,k=d,d)$ proposed  codes with parameters $(\alpha^{(m)},\beta^{(m)},F^{(m)})=\left(\binom{d}{m},\binom{d-1}{m-1},m\binom{d+1}{m+1} \right)$, for every failed node $f\in \intv{n}$ and set of helpers $\sen H \subseteq \intv {n} \setminus\set{f}$ with $\size{H}=d$,  the content of node $f$  can be exactly regenerated by downloading $\beta$ symbols from each of nodes in $\sen H$. More precisely,  the $\sen I$-th entry of the node $f$ can be recovered using
\begin{align}
\left[\enc_{f} \cdot \SM\right]_{\sen I} = \sum_{x\in \sen I} (-1)^{\ind{\sen I}{x}} \left[\repSpc{f}{m}\right]_{x,\sen I \setminus \seq{x}}.
\label{eq:repair}
\end{align}
\end{prop}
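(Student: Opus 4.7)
The plan is to verify \eqref{eq:repair} by a direct algebraic expansion of its right-hand side, using only the definitions \eqref{eq:def:xi} of $\repMat{f}{m}$, \eqref{eq:def:S} of $\SM$, and the parity relations \eqref{eq:parityeq}. The bandwidth claim (only $\beta^{(m)}$ symbols from each helper) is already covered by Proposition~\ref{lm:beta}, since the column space of $\repMat{f}{m}$, and hence of each helper's vector $\enc_h\cdot\SM\cdot\repMat{f}{m}$, has dimension at most $\binom{d-1}{m-1}$. So the main content of Proposition~\ref{prop:node:rep} is the decoding identity itself.

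I would first expand $[\repSpc{f}{m}]_{x,\sen I\setminus\{x\}}=\sum_{\sen I'}\SM_{x,\sen I'}\,\repMat{f}{m}_{\sen I',\sen I\setminus\{x\}}$; by the support rule in \eqref{eq:def:xi}, the only contributing rows $\sen I'$ are those of the form $(\sen I\setminus\{x\})\cup\{y\}$ for $y\notin \sen I\setminus\{x\}$, so substituting into the right-hand side of \eqref{eq:repair} produces a double sum over pairs $(x,y)$ with $x\in\sen I$ and $y\in[d]\setminus(\sen I\setminus\{x\})$. After exchanging the order of summation, I collect the coefficient of $\psi_{f,y}$ for each $y\in[d]$, which splits the analysis into two cases. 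When $y\in\sen I$ only the diagonal term $x=y$ survives, and the two occurrences of $(-1)^{\ind{\sen I}{y}}$ cancel, yielding $\psi_{f,y}\SM_{y,\sen I}$ directly. When $y\notin\sen I$ every $x\in\sen I$ contributes, and because $x\notin (\sen I\setminus\{x\})\cup\{y\}$, definition \eqref{eq:def:S} turns each entry of $\SM$ into a parity symbol $w_{x,\sen K}$ with $\sen K:=\sen I\cup\{y\}$, while the target value $\SM_{y,\sen I}$ equals $w_{y,\sen K}$. The task thus reduces to verifying
\[
\sum_{x\in\sen I}(-1)^{\ind{\sen I}{x}+\ind{(\sen I\setminus\{x\})\cup\{y\}}{y}}\,w_{x,\sen K}=w_{y,\sen K},
\]
which I would derive by isolating $w_{y,\sen K}$ from \eqref{eq:parityeq} applied to $\sen K$, giving $w_{y,\sen K}=\sum_{x\in\sen I}(-1)^{\ind{\sen K}{x}-\ind{\sen K}{y}+1}w_{x,\sen K}$, and then matching the two exponents modulo~$2$ via the elementary identities $\ind{\sen K}{x}=\ind{\sen I}{x}+[y<x]$, $\ind{\sen K}{y}=\ind{\sen I}{y}+1$, and $\ind{(\sen I\setminus\{x\})\cup\{y\}}{y}=\ind{\sen I}{y}+[y<x]$; the two exponents then differ by $2\ind{\sen I}{y}$ and hence agree modulo~$2$. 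Combining the two cases shows that the right-hand side of \eqref{eq:repair} collapses to $\sum_{y=1}^{d}\psi_{f,y}\SM_{y,\sen I}=[\enc_f\cdot\SM]_{\sen I}$, which is exactly the $\sen I$-th coordinate of the failed node's content.

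The main obstacle I anticipate is precisely this sign-tracking: the alternating signs in \eqref{eq:def:xi} and the alternating signs in the parity relation \eqref{eq:parityeq} are engineered so that the two exponents above match, but confirming this requires a careful case split on whether $y<x$ or $y>x$ (the case $y=x$ is impossible since $y\notin\sen I$ while $x\in\sen I$) and care with the index arithmetic modulo~$2$. Once that identity is in hand, everything else is routine sum-swapping and a straightforward reading of the defining equations of $\SM$ and $\enc$.
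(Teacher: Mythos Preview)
Your proposal is correct and follows essentially the same approach as the paper's own proof: expand the right-hand side of \eqref{eq:repair} via the support structure of $\repMat{f}{m}$, split the resulting double sum according to whether $y\in\sen I$ (the paper phrases this as $y=x$ versus $y\neq x$), and for the $y\notin\sen I$ case invoke the parity relation \eqref{eq:parityeq} together with the same sign identity you isolate (the paper writes it as $(-1)^{\ind{(\sen I\setminus\{x\})\cup\{y\}}{y}}(-1)^{\ind{\sen I}{x}}=(-1)^{\ind{\sen I\cup\{y\}}{y}+1}(-1)^{\ind{\sen I\cup\{y\}}{x}}$, which is equivalent to your index computations). The only cosmetic difference is that you organize the argument by first collecting the coefficient of $\psi_{f,y}$, whereas the paper keeps the sums nested a bit longer before swapping; the substance and the key sign verification are the same.
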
	
The proof of this proposition is presented in Section~\ref{sec:nddproofs}.

\begin{rem}
Note that for a code defined on the Galois field $\mathsf{GF}(2^s)$ with characteristic $2$, we have $-1=+1$, and hence, all the positive and negative signs disappear. In particular, the parity equation in  \eqref{eq:parityeq} will simply reduce to $\sum_{y \in \sen I} w_{y, \sen I} = 0$, the non-zero entries of the repair encoder matrix in \eqref{eq:def:xi} will be $\psi_{f,x}$, and the repair equation in \eqref{eq:repair} will be replaced by $\left[\enc_{f} \cdot \SM\right]_{\sen I} = \sum_{x\in \sen I} \left[\repSpc{f}{m}\right]_{x,\sen I \setminus \seq{x}}$.
\end{rem}

\subsection{Multiple Failure Exact Repair}
\label{sec:multi-repair}
The repair mechanism proposed for multiple failure scenario is similar to that of the single failure case. We consider a set of failed nodes $\sen E$ with $e=|\sen E|$ failures. Each helper node $h\in \sen H$ sends its repair data to all failed nodes simultaneously. Each failed node $f\in \sen E$ can recover the repair data $\set{\enc_h\cdot \SM \cdot \repMat{f}{m}:h \in \sen H}$, and the repair mechanism is similar to that explained in Proposition~\ref{prop:node:rep}. 

A naive approach is to simply concatenate all the required repair data $\set{\enc_h\cdot \SM \cdot \repMat{f}{m}:f \in \sen E}$ at the helper node $h\in \sen H$ and send it to the failed nodes. More precisely, for a set of failed nodes $\sen E=\{f_1,f_2,\dots, f_e\}$ and a helper node $h\in \sen H$, we define its repair data as $\enc_h\cdot \SM \cdot \repMat{\sen E}{m}$, where 
\begin{align}
		\repFam{\sen E}{m}= \left[ \begin{array}{c|c|c|c} \repMat{f_1}{m} &\repMat{f_2}{m} & \cdots & \repMat{f_e}{m} \end{array} \right]. 
		\label{eq:def:xi-multi}
	\end{align}
This is simply a concatenation of the repair data for individual repair of $f\in \sen E$, and the content of each failed node can be exactly reconstructed according to Proposition~\ref{prop:node:rep}. The repair bandwidth required for naive concatenation scheme is $e\times \beta^{(m)}_1 = e\binom{d-1}{m-1}$. Instead, we show that the bandwidth can be opportunistically utilized by exploiting the intersection between the repair space of the different failed nodes. The following proposition shows that the repair data $\enc_h\cdot \SM \cdot \repMat{\sen E}{m}$ can be delivered to the failed nodes by communicating only $\beta_e^{(m)}$ repair symbols.

\begin{prop}
	Assume that a family of $e$ nodes $\sen E = \set{f_1,f_2,\cdots,f_e}$ are failed. Then the rank  of matrix $\repFam{\sen E}{m}$ defined in \eqref{eq:def:xi-multi}	is at most $\binom{d}{m}-\binom{d-e}{m}$.
\label{prop:multi-beta}
\end{prop}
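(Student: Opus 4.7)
My plan is to interpret each repair-encoder matrix as the matrix of an interior product (Koszul-type differential) on the exterior algebra of $V := \mathbb{F}^d$, and then compute the common kernel of these maps via a change of basis enabled by the MDS property of $\enc$. Concretely, for any row vector $\psi \in V^*$, define $\partial^{(\psi)}\colon \Lambda^m V \to \Lambda^{m-1} V$ by $\partial^{(\psi)}(e_{\sen I}) = \sum_{y \in \sen I} (-1)^{\ind{\sen I}{y}} \psi_y \, e_{\sen I \setminus \{y\}}$. Up to an overall sign this is the standard interior product $\iota_{\psi}$, so it is an intrinsic linear map (independent of the choice of basis on $V$) and is linear in $\psi$. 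By inspection of \eqref{eq:def:xi}, the matrix $\repMat{f}{m}$---with rows indexed by $m$-subsets and columns by $(m-1)$-subsets---is precisely the matrix of $\partial^{(\psi_f)}$ in the standard basis, where $\psi_f := \enc_f$ is the $f$-th row of the encoder matrix.

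By the block construction \eqref{eq:def:xi-multi}, $\repFam{\sen E}{m}$ represents the direct-sum map $\Delta\colon \Lambda^m V \to (\Lambda^{m-1} V)^{\oplus e}$, $v \mapsto (\partial^{(\psi_{f_1})}(v), \ldots, \partial^{(\psi_{f_e})}(v))$. Therefore $\mathrm{rank}(\repFam{\sen E}{m}) = \binom{d}{m} - \dim \bigcap_{j=1}^{e} \ker \partial^{(\psi_{f_j})}$, so it suffices to show the common kernel has dimension at least $\binom{d-e}{m}$.

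Since $\enc$ is MDS, every $d \times d$ submatrix is invertible, so any $e \leq d$ of its rows are linearly independent; in particular $\psi_{f_1}, \ldots, \psi_{f_e}$ are linearly independent in $V^*$. Extend them to a basis $\psi_{f_1}, \ldots, \psi_{f_e}, \eta_{e+1}, \ldots, \eta_d$ of $V^*$, and let $\tilde e_1, \ldots, \tilde e_d$ be the dual basis of $V$, so $\psi_{f_j}(\tilde e_i) = \delta_{j,i}$ for $j \leq e$. In the induced basis $\{\tilde e_{\sen I}\}$ of $\Lambda^m V$, the intrinsic formula for the interior product then yields $\partial^{(\psi_{f_j})}(\tilde e_{\sen I}) = (-1)^{\ind{\sen I}{j}} \tilde e_{\sen I \setminus \{j\}}$ when $j \in \sen I$ and $0$ otherwise. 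Hence $\ker \partial^{(\psi_{f_j})} = \mathrm{span}\{\tilde e_{\sen I} : j \notin \sen I\}$, and intersecting over $j \in \{1, \ldots, e\}$ gives $\mathrm{span}\{\tilde e_{\sen I} : \sen I \subseteq \{e+1, \ldots, d\}\}$, which has dimension $\binom{d-e}{m}$. This in fact yields the equality $\mathrm{rank}(\repFam{\sen E}{m}) = \binom{d}{m} - \binom{d-e}{m}$ whenever $d-e \geq m$.

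The step requiring the most care is verifying that $\partial^{(\psi)}$ is intrinsic and transforms correctly under the change of basis on $V$---this is standard for the interior product, but the sign bookkeeping becomes delicate if one prefers to verify it directly via the transition matrices on $\Lambda^m V$. As a sanity check, for $e=1$ the bound recovers Proposition~\ref{lm:beta}, since $\binom{d}{m} - \binom{d-1}{m} = \binom{d-1}{m-1} = \beta^{(m)}$; and for $e > d$ or $d-e < m$ we have $\binom{d-e}{m}=0$, so the bound reduces to the trivial $\mathrm{rank}\leq \binom{d}{m}$.
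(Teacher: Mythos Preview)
Your proof is correct and takes a genuinely different route from the paper. The paper works entirely in coordinates: it exhibits an explicit $\binom{d-e}{m}\times\binom{d}{m}$ matrix $\NSpc{\sen E}{m}$ whose entries are signed $e\times e$ minors of $\enc[\sen E,:]$, verifies $\NSpc{\sen E}{m}\cdot\repFam{\sen E}{m}=\mathbf{0}$ column by column via a Laplace expansion that produces an $(e{+}1)\times(e{+}1)$ determinant with a repeated row, and then checks full-rankness of $\NSpc{\sen E}{m}$ by isolating a diagonal submatrix with entries $\pm\det{\enc[\sen E,\sen Q]}$. Your argument instead identifies $\repMat{f}{m}$ as (minus) the interior product $\iota_{\psi_f}\colon\Lambda^m V\to\Lambda^{m-1}V$, uses the basis-independence of $\iota_\psi$ to pass to a basis of $V$ dual to $\psi_{f_1},\dots,\psi_{f_e},\eta_{e+1},\dots,\eta_d$, and reads off the common kernel as $\Lambda^m\bigl(\bigcap_j\ker\psi_{f_j}\bigr)$.

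What each approach buys: the paper's method is elementary and self-contained---it needs only determinant identities and no exterior algebra---and it produces the null-space vectors explicitly in the original coordinate system, which is what is illustrated in the example in Section~\ref{sec:ex:mulrep}. Your method is shorter, conceptually cleaner, and as you note actually gives equality $\mathrm{rank}\,\repFam{\sen E}{m}=\binom{d}{m}-\binom{d-e}{m}$ for $e\le d$, not merely the upper bound. The two are related under the hood: the paper's $\NSpc{\sen E}{m}$ is essentially the $m$-th compound (Cauchy--Binet) of the change-of-basis matrix you invoke, restricted to the rows indexed by $\sen I\subseteq\intv{d}\setminus\sen Q$. The one point you flag---that $\iota_\psi$ is intrinsic---is indeed the crux; it is standard, but if a fully self-contained write-up is desired you should either cite it or supply the short verification that $\iota_\psi(v_1\wedge\cdots\wedge v_m)=\sum_i(-1)^{i-1}\psi(v_i)\,v_1\wedge\cdots\widehat{v_i}\cdots\wedge v_m$ is well-defined on decomposable tensors.
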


\section{An Illustrative Example for $(n=8,k=4,d=4)$ codes}
\label{sec:example}
Before presenting the formal proof of the main properties of the proposed code, we show the code construction and the repair mechanism through an example in this section. This example is similar to that of \cite{elyasi2016determinant}, and will be helpful to understand the notation and the details of the code construction, as well as to provide an intuitive justification for its underlying properties.

Let's consider a distributed storage system with parameters $(n,k,d)=(8,4,4)$ and an operating mode  $m=2$. The parameters of the proposed regeneration code for this point of the trade-off are given by

\begin{align}
\left(\!\alpha^{(2)}, \beta^{(2)}, F^{(2)}\!\right)\!=\! \left(\!\binom{4}{2} 
, \binom{4-1}{2-1}
, 2\binom{4+1}{2+1}\!\right)=\left(6,3,20\right).
\end{align} 

We first label and partition the information symbols into two groups, $\sen V$ and $\sen W$, with $|\sen V|=m \binom{d}{m} = 2 \binom{4}{2} = 12$ and $|\sen W| = m \binom{d}{m+1}= 2\binom{4}{3}= 8$. Note that $|\cV| + |\cW|= 20 =F$. 
\begin{align*}
\sen V &=\left\{v_{1,\seq{1,2}},v_{2,\seq{1,2}},v_{1,\seq{1,3}},v_{3,\seq{1,3}},
v_{1,\seq{1,4}},v_{4,\seq{1,4}}, v_{2,\seq{2,3}},  v_{3,\seq{2,3}}, v_{2,\seq{2,4}},v_{4,\seq{2,4}},        
v_{3,\seq{3,4}},v_{4,\seq{3,4}}\right\},\\
\sen W &=\left\{ w_{1,\seq{1,2,3}},w_{2,\seq{1,2,3}},w_{1,\seq{1,2,4}},w_{2,\seq{1,2,4}}, w_{1,\seq{1,3,4}},w_{3,\seq{1,3,4}},w_{2,\seq{2,3,4}},w_{3,\seq{2,3,4}}\right\}.
\end{align*}
Moreover, for each subset $\sen I\subseteq \intv{4}$ with $|\sen I|=m+1=3$, we define parity symbols as
\begin{align}
\begin{split}
\left\{
\begin{array}{l}
\sen I=\set{1,2,3}:w_{3,\seq{1,2,3}}=w_{2,\seq{1,2,3}}-w_{1,\seq{1,2,3}}, \\
\sen I=\set{1,2,4}:w_{4,\seq{1,2,4}}=w_{2,\seq{1,2,4}}-w_{1,\seq{1,2,4}}, \\
\sen I=\set{1,3,4}:w_{4,\seq{1,3,4}}=w_{3,\seq{1,3,4}}-w_{1,\seq{1,3,4}}, \\
\sen I=\set{2,3,4}:w_{4,\seq{2,3,4}}=w_{3,\seq{2,3,4}}-w_{2,\seq{2,3,4}}. 
\end{array}\right.
\end{split}
\label{eq:ex:w-par}
\end{align}

Next, the message matrix $\SM$ will be formed by placing  $v$ and $w$ symbols as specified in \eqref{eq:def:S}. The resulting message matrix is given by
\begin{figure*}[!h]
\centering
\includegraphics[width=0.8\textwidth]{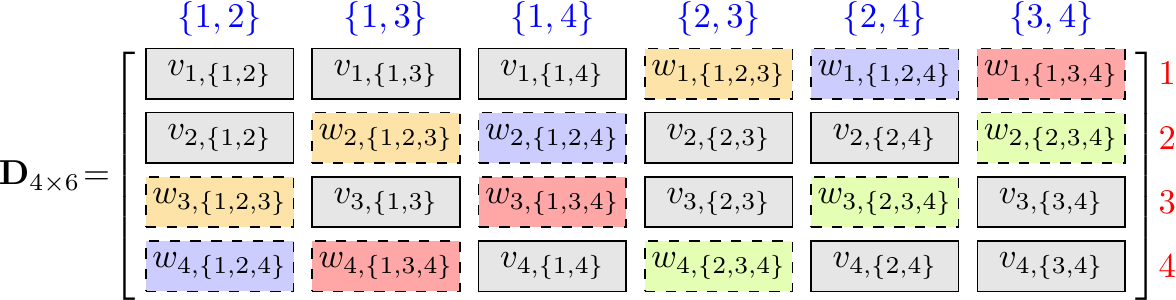}
\end{figure*}

The next step for encoding the data is multiplying $\SM$ by an encoder matrix $\enc$. To this end, we choose a finite field $\mathbb{F}_{13}$ (with at least $n=8$ distinct non-zero entries), and pick an $8 \times 4$ Vandermonde matrix generated by $i=1,2,3,4,5,6,7,8$. We convert this matrix to a systematic MDS matrix by multiplying it from the right by the inverse of its top $4\times 4$ matrix.  That is, 
\begin{align*}
\enc_{8 \times 4} &= \left[
\begin{array}{c}
\enc_1 \\
\enc_2 \\
\enc _3 \\
\enc _4 \\
\enc _5 \\
\enc _6 \\
\enc_7 \\
\enc_8 \\
\end{array}
\right ]
=
\left[
\begin{array}{cccc}
1^0 & 1^1 & 1^2 & 1^3 \\
2^0 & 2^1 & 2^2 & 2^3 \\
3^0 & 3^1 & 3^2 & 3^3 \\
4^0 & 4^1 & 4^2 & 4^3 \\
5^0 & 5^1 & 5^2 & 5^3 \\
6^0 & 6^1 & 6^2 & 6^3 \\
7^0 & 7^1 & 7^2 & 7^3 \\
8^0 & 8^1 & 8^2 & 8^3 
\end{array}
\right]
\cdot
\left[
\begin{array}{cccc}
1^0 & 1^1 & 1^2 & 1^3 \\
2^0 & 2^1 & 2^2 & 2^3 \\
3^0 & 3^1 & 3^2 & 3^3 \\
4^0 & 4^1 & 4^2 & 4^3
\end{array}
\right]^{-1}
=
\left[
\begin{array}{ccccc}
1 & 0 & 0 & 0\\
0 & 1 & 0 & 0\\
0& 0 & 1 & 0\\
0 & 0 & 0 & 1\\
12 & 4 & 7 & 4\\
9 & 2 & 6 & 10\\
3 & 10 & 7 & 7\\
6 & 5 & 7 & 9\\
\end{array}
\right] \quad (\text{mod\;} 13).
\end{align*}

Note that every $k=4$ rows of matrix $\enc$ are linearly independent, and form an invertible matrix. 
Then the content of node $i$ is formed by row $i$ in the matrix product $\enc \cdot \SM$, which we denote by $\enc_i \cdot \SM$. 

Data recovery from the content of any $k=4$ node is immediately implied by the MDS property of the encoder matrix. For further details, we refer to Section~IV in \cite{elyasi2016determinant}. Next, we describe the repair process for single and multiple failures.

\subsection{Single Failure Repair}
\label{sec:ex:1rep}

First, suppose that a non-systematic node $f$ fails, and we wish to repair it by the help of the systematic nodes $\sen H=\{1,2,3,4\}$, by downloading $\beta = 3$  from each. The content of node $f$ is given by $\enc_f\cdot \SM$, which includes $\alpha=6$ symbols. Note that the content of this node is a row vector whose elements has the same labeling as the columns of $\SM$, i.e all $m=2$ elements subsets of $\intv{d}=\set{1,2,3,4}$. The symbols of this node are given by:
\begin{align}
\begin{split}
\left\{
\begin{array}{lllll}
\textcolor{blue}{\sen I = \set {1,2}:}  &\psi _{f,1} v_{1,\{1,2\}} &\hspace{-3mm}+ \psi _{f,2} v_{2,\{1,2\}}&\hspace{-3mm}+ \psi _{f,3} w_{3,\{1,2,3\}} &\hspace{-3mm}+ \psi _{f,4} w_{4,\{1,2,4\}},\\
\textcolor{blue}{\sen I = \set {1,3}:}  &\psi _{f,1} v_{1,\{1,3\}} &\hspace{-3mm}+ \psi _{f,2} w_{2,\{1,2,3\}} &\hspace{-3mm}+ \psi _{f,3} v_{3,\{1,3\}} &\hspace{-3mm}+ \psi _{f,4} w_{4,\{1,3,4\}},\\
\textcolor{blue}{\sen I = \set {1,4}:}  &\psi _{f,1} v_{1,\{1,4\}} &\hspace{-3mm}+ \psi _{f,2} w_{2,\{1,2,4\}}  &\hspace{-3mm}+  \psi _{f,3} w_{3,\{1,3,4\}} &\hspace{-3mm}+  \psi _{f,4} v_{4,\{1,4\}},\\
\textcolor{blue}{\sen I = \set {2,3}:}  &\psi _{f,1} w_{1,\{1,2,3\}} &\hspace{-3mm}+  \psi _{f,2} v_{2,\{2,3\}} &\hspace{-3mm}+  \psi _{f,3} v_{3,\{2,3\}} &\hspace{-3mm}+ \psi _{f,4} w_{4,\{2,3,4\}}, \\
\textcolor{blue}{\sen I = \set {2,4}:}  &\psi_{f,1} w_{1,\{1,2,4\}} &\hspace{-3mm}+ \psi _{f,2} v_{2,\{2,4\}} &\hspace{-3mm}+ \psi _{f,3} w_{3,\{2,3,4\}} &\hspace{-3mm}+  \psi _{f,4} v_{4,\{2,4\}},\\
\textcolor{blue}{\sen I = \set {3,4}:} &\psi _{f,1} w_{1,\{1,3,4\}} &\hspace{-3mm}+ \psi _{f,2} w_{2,\{2,3,4\}} &\hspace{-3mm}+ \psi _{f,3} v_{3,\{3,4\}} &\hspace{-3mm}+ \psi _{f,4} v_{4,\{3,4\}}.
\end{array}
\right.
\end{split}
\label{eq:ex:node-3}
\end{align}

In the repair procedure using the systematic nodes as helpers, every symbol will be repaired by $m$ nodes. Recall that $d$ helper nodes contribute in the repair process by sending $\beta=\binom{d-1}{m-1}$ symbols each, in order to repair $\alpha=\binom{d}{m}$ missing symbols. Hence, the number of repair equations per missing symbol is $d\beta/\alpha = m$, which matches with the proposed repair mechanism. 

The $m=2$ helpers for each missing encoded symbol are those who have a copy of the corresponding $v$-symbols, e.g., for the symbol indexed by $\sen I = \set {1,2}$ which has $v_{1,\{1,2\}}$ and $v_{2,\{1,2\}}$, the contributing helpers are nodes $1$ (who has a copy of $v_{1,\{1,2\}}$) and node $2$ (who stores a copy of $v_{2,\{1,2\}}$). To this end, node $1$ can send $\psi _{f,1} v_{1,\{1,2\}} $ which node $2$ sends $\psi _{f,2} v_{2,\{1,2\}} $ to perform the repair. 

It can be seen that the $\set{1,2}$-th missing symbols has also two other terms depending on  $w_{3,\{1,2,3\}}$ and $w_{4,\{1,2,4\}}$, which are stored at nodes $3$ and $4$, respectively. A naive repair mechanism requires these two nodes  also to contribute in this repair procedure, which yields in a full data recovery in order to repair a failed node. Alternatively, we can reconstruct these $w$-symbols using the parity equations, and the content of the the first two helper nodes. Recall from \eqref{eq:parityeq} that
\begin{align*}
w_{3,\{1,2,3\}} &= -w_{1,\{1,2,3\}} +w_{2,\{1,2,3\}},\\
w_{4,\{1,2,4\}} &= - w_{1,\{1,2,4\}} +w_{2,\{1,2,4\}},
\end{align*}
where $w_{1,\{1,2,3\}}$ and $w_{1,\{1,2,4\}}$ are stored in node $1$ and $w_{2,\{1,2,3\}}$ and $w_{2,\{1,2,4\}}$ are stored in node $2$. Hence, the content of nodes $1$ and $2$ are sufficient to reconstruct the $\set{1,2}$-th symbol at the failed node $f$. To this end, node $1$ computes $A= \psi _{f,1} v_{1,\{1,2\}} - \psi _{f,3} w_{1,\{1,2,3\}}- \psi _{f,4} w_{1,\{1,2,4\}}$ (a linear combination of its first, forth, and fifth entries), and sends $A$ to $f$. Similarly, node $2$ sends  $B= \psi _{f,2} v_{2,\{1,2\}}+ \psi _{f,3} w_{2,\{1,2,3\}}+ \psi _{f,4} w_{2,\{1,2,4\}}$ (a linear combination  of its first, second, and third coded symbols).  Upon receiving these symbols, the $\set{1,2}$-th missing symbols of node $f$ can be recovered from 
\begin{align}
A+B&=\left(\psi _{f,1} v_{1,\{1,2\}} - \psi _{f,3} w_{1,\{1,2,3\}}- \psi _{f,4} w_{1,\{1,2,4\}}\right)+\left(\psi _{f,2} v_{2,\{1,2\}}+ \psi _{f,3} w_{2,\{1,2,3\}}+ \psi _{f,4} w_{2,\{1,2,4\}}\right) \nonumber\\
&= \psi _{f,1} v_{1,\{1,2\}}+ \psi _{f,2} v_{2,\{1,2\}}+ \psi _{f,3} \left(w_{2,\{1,2,3\}}-w_{1,\{1,2,3\}}\right) +\psi _{f,4} \left(w_{2,\{1,2,4\}}-w_{1,\{1,2,4\}}  \right)\nonumber\\
&= \psi _{f,1} v_{1,\{1,2\}} +\psi _{f,2} v_{2,\{1,2\}} + \psi _{f,3} w_{3,\{1,2,3\}} +\psi _{f,4} w_{4,\{1,2,4\}}. \label{eq:ex:rep:parity}
\end{align}
In general,  $v$ symbols are repaired directly by communicating an identical copy of them, while $w$ symbols are repaired indirectly, using their parity equations. This is the general rule that we use for repair of all other missing symbols of node $f$. It can be seen that each helper node participates in the repair of $\beta=3$ missing symbols, by sending one repair symbol for each. For instance, node $1$ contributes in the repair of symbols indexed by $\set{1,2}$, $\set{1,3}$, and $\set{1,4}$. The repair equation sent by node $1$ for each these repair scenarios are listed below: 
\begin{align}
\mathsf{repair\ symbols\ sent\ by\ node\;}1:\begin{split}
\left\{
\begin{array}{l}
\textcolor{blue}{\sen I = \set {1,2}:}  \psi _{f,1} v_{1,\{1,2\}}- \psi _{f,3} w_{1,\{1,2,3\}} - \psi _{f,4} w_{1,\{1,2,4\}}, \\
\textcolor{blue}{\sen I = \set {1,3}:}  \psi _{f,1} v_{1,\{1,3\}} + \psi _{f,2} w_{1,\{1,2,3\}}- \psi _{f,4} w_{1,\{1,3,4\}}, \\
\textcolor{blue}{\sen I = \set {1,4}:}  \psi _{f,1} v_{1,\{1,4\}} + \psi _{f,2} w_{1,\{1,2,4\}} + \psi _{f,3} w_{1,\{1,3,4\}}. 
\end{array}
\right.
\end{split}
\label{eq:ex:rep:node:1}
\end{align}
Similarly, the repair symbols sent from helper nodes $2$, $3$, and $4$ are given by
\begin{eqnarray}
&\mathsf{repair\ symbols\ sent\ by\ node\;}2:&\begin{split}
\left\{
\begin{array}{l}
\textcolor{blue}{\sen I = \set {1,2}:}  \psi _{f,2} v_{2,\{1,2\}}+ \psi _{f,3} w_{2,\{1,2,3\}}+  \psi _{f,4} w_{2,\{1,2,4\}},\\
\textcolor{blue}{\sen I = \set {2,3}:}  \psi _{f,1} w_{2,\{1,2,3\}} + \psi _{f,2} v_{2,\{2,3\}}-  \psi _{f,4} w_{2,\{2,3,4\}}, \\
\textcolor{blue}{\sen I = \set {2,4}:} \psi _{f,1} w_{2,\{1,2,4\}}  +  \psi _{f,2} v_{2,\{2,4\}} +  \psi _{f,3} w_{2,\{2,3,4\}},
\end{array}
\right.
\end{split}\\
&\mathsf{repair\ symbols\ sent\ by\ node\;}3:&\begin{split}
\left\{
\begin{array}{l}
\textcolor{blue}{\sen I = \set {1,3}:}  \psi _{f,2} w_{3,\{1,2,3\}} + \psi _{f,3} v_{3,\{1,3\}}+  \psi _{f,4} w_{3,\{1,3,4\}}, \\
\textcolor{blue}{\sen I = \set {2,3}:} - \psi _{f,1} w_{3,\{1,2,3\}} + \psi _{f,3} v_{3,\{2,3\}} +  \psi _{f,4} w_{3,\{2,3,4\}},\\
\textcolor{blue}{\sen I = \set {3,4}:}  \psi _{f,1} w_{3,\{1,3,4\}} +  \psi _{f,2} w_{3,\{2,3,4\}} +  \psi _{f,3} v_{3,\{3,4\}},
\end{array}
\right.
\end{split}\\
&\mathsf{repair\ symbols\ sent\ by\ node\;}4:&\begin{split}
\left\{
\begin{array}{l}
\textcolor{blue}{\sen I = \set {1,4}:}  \psi _{f,2} w_{4,\{1,2,4\}} +  \psi _{f,3} w_{4,\{1,3,4\}} + \psi _{f,4}v_{4,\{1,4\}}, \\
\textcolor{blue}{\sen I = \set {2,4}:} - \psi _{f,1} w_{4,\{1,2,4\}} +  \psi _{f,3} w_{4,\{2,3,4\}} +  \psi _{f,4} v_{4,\{2,4\}}, \\
\textcolor{blue}{\sen I = \set {3,4}:} - \psi _{f,1} w_{4,\{1,3,4\}} -  \psi _{f,2} w_{4,\{2,3,4\}} +  \psi _{f,4} v_{4,\{3,4\}}.
\end{array}
\right.
\end{split}
\label{eq:ex:rep:node:4}
\end{eqnarray}

The repair symbols of helper node $h\in \{1,2,3,4\}$ in \eqref{eq:ex:rep:node:1}-\eqref{eq:ex:rep:node:4} could be obtain from  $\enc_h \cdot \SM \cdot \repMat{f}{2}$, which is the content of the helper nodes (i.e., $\enc_h \cdot \SM$) times the repair encoder matrix  for $m=2$ (i.e., $\repMat{f}{2}$) defined in~\eqref{eq:def:xi}: 
\begin{align}
\begin{split}
\begin{tikzpicture}
\matrix (Xi4) [matrix of math nodes,%
ampersand replacement=\&,
column sep=0pt,
row sep = 3pt,
column 1/.style={colIndColor,column sep=0pt},
nodes={anchor= center,minimum height=15pt}
]
{
    \phantom{{ }}\&\phantom{{ }}\&|[color=repIndColor]|\seq{1} \& |[color=repIndColor]|\seq{2} \& |[color=repIndColor]|\seq{3} \& |[color=repIndColor]|\seq{4} \\
    \seq{1,2}\& \phantom{{ }} \& \psi _{f,2} \& -\psi _{f,1} \& 0 \& 0 \\
    \seq{1,3}\& \phantom{{ }} \&\psi _{f,3} \& 0 \& -\psi _{f,1} \& 0 \\
    \seq{1,4}\& \phantom{{ }} \&\psi _{f,4} \& 0 \& 0 \& -\psi _{f,1} \\
    \seq{2,3}\& \phantom{{ }} \&0 \& \psi _{f,3} \& -\psi _{f,2} \& 0 \\
    \seq{2,4}\& \phantom{{ }} \&0 \& \psi _{f,4} \& 0 \& -\psi _{f,2} \\
    \seq{3,4}\& \phantom{{ }} \&0 \& 0 \& \psi _{f,4} \& -\psi _{f,3} \\
};
\node[left delimiter=[, xshift=10pt, yshift=1pt, fit={(Xi4-2-2.north west)(Xi4-7-2.south west)}] (brleft){};
\node[right delimiter={]},fit={($ (Xi4-2-6.north east) + (0pt,0pt)$)($(Xi4-7-6.south east) + (0pt,0pt)$)}] (brright){};  
\node[left = 0pt of Xi4-4-1]{$\repMat{f}{2}=$};
\end{tikzpicture}
\end{split}.
\label{eq:ex:Xi}
\end{align}   
Note that, even though this matrix has $\binom{4}{2-1}=4$ columns, and hence, $\enc_h \cdot \SM \cdot \repMat{f}{2}$ is a vector of length  $4$, it suffices to communicate only\footnote{Indeed the entire repair process can be expressed in terms of the  $\beta=3$ repair symbols sent by the helper nodes. However, the recovery equations for the missing symbols are mathematically more symmetric and compact  if we allow  the fourth symbol to appear in the repair equations.}  $\beta=3$ symbols from the helper node to the failed node and the fourth symbol can be reconstructed from the other $3$ symbols at the failed node. This is due to the fact that the rank of matrix $\repMat{f}{2}$ equals to $\beta =3$. More precisely, a non-zero linear combination of the columns of $\repMat{f}{2}$ is zero, that is,
\begin{align}
\repMat{f}{2} \cdot \begin{bmatrix}
\psi_{f,1} & \psi_{f,2} & \psi_{f,3} & \psi_{f,4} 
\end{bmatrix}^{\intercal}=\mathbf{0}.
\label{eq:zero-lin-comb}
\end{align}
Therefore, (if $\psi_{f,4}\neq 0$) the helper node $h$ only sends the first $\beta=3$ symbols of the vector  $\enc_h \cdot \SM \cdot \repMat{f}{2}$, namely, $[\enc_h \cdot \SM \cdot \repMat{f}{2}]_1$, $[\enc_h \cdot \SM \cdot \repMat{f}{2}]_2$, and $[\enc_h \cdot \SM \cdot \repMat{f}{2}]_3$, and the forth symbol $[\enc_h \cdot \SM \cdot \repMat{f}{2}]_4$ can be appended to it at node $f$ from 
\begin{align*}
[\enc_h \cdot \SM \cdot \repMat{f}{2}]_4 = \psi_{f,4}^{-1} \cdot \sum_{i=1}^3 \psi_{f,i} \cdot [\enc_h \cdot \SM \cdot \repMat{f}{2}]_i.
\end{align*}
Upon receiving the repair data from $d=4$ helper nodes $\{1,2,3,4\}$, namely $\{\enc_1 \cdot \SM \cdot \repMat{f}{2}, \enc_2\cdot \SM \cdot\repMat{f}{2}, \enc_3 \cdot\SM \cdot\repMat{f}{2}, \enc_4 \cdot\SM \cdot\repMat{f}{2}\}$, the failed can stack them to obtain a matrix 
\begin{align}
\begin{bmatrix}
\enc_1 \cdot \SM \cdot \repMat{f}{2} \\
\enc_2 \cdot \SM \cdot \repMat{f}{2}\\
\enc_3 \cdot \SM \cdot \repMat{f}{2}\\
\enc_4 \cdot \SM \cdot \repMat{f}{2}
\end{bmatrix} = \enc [\{1,2,3,4\},:]\cdot \SM \cdot \repMat{f}{2} = \SM \cdot \repMat{f}{2},
\label{eq:rep-subspace-systematic}
\end{align}
where the last identity is due to the fact that $\enc [\{1,2,3,4\},:]= \mathbf{I}$ is the identity matrix. We refer to this matrix by the \emph{repair space matrix} of node $f$, and denote it by $\repSpc{f}{2} = \SM \cdot \repMat{f}{2}$, as  presented at the top of the next page.

\begin{figure*}[!t]
	\begin{footnotesize}
		\begin{align*}
		\begin{tikzpicture}[scale=1, every node/.style={scale=1}]
		\matrix (RMAT) [matrix of math nodes,%
		ampersand replacement= \&,
		column sep = 10pt,
		column 1/.style={column sep=0,text width=5},
		nodes={anchor= center}
		]
		{ 
			\phantom{{ }} \& |[color=repIndColor]| \set{1} \& |[color=repIndColor]| \set{2}\&\\
    		|[color=rowIndColor]|1 \& v_{1,\{1,2\}} \psi _{f,2}+v_{1,\{1,3\}} \psi _{f,3}+v_{1,\{1,4\}} \psi _{f,4} \& -v_{1,\{1,2\}} \psi _{f,1}+w_{1,\{1,2,3\}} \psi _{f,3}+w_{1,\{1,2,4\}} \psi _{f,4}\& \cdots \\
            |[color=rowIndColor]|2 \& v_{2,\{1,2\}} \psi _{f,2}+w_{2,\{1,2,3\}} \psi _{f,3}+w_{2,\{1,2,4\}} \psi _{f,4} \& -v_{2,\{1,2\}} \psi _{f,1}+v_{2,\{2,3\}} \psi _{f,3}+v_{2,\{2,4\}} \psi _{f,4}\& \cdots \\
            |[color=rowIndColor]|3 \& w_{3,\{1,2,3\}} \psi _{f,2}+v_{3,\{1,3\}} \psi _{f,3}+w_{3,\{1,3,4\}} \psi _{f,4} \& -w_{3,\{1,2,3\}} \psi _{f,1}+v_{3,\{2,3\}} \psi _{f,3}+w_{3,\{2,3,4\}} \psi _{f,4}\& \cdots \\
            |[color=rowIndColor]| 4 \& w_{4,\{1,2,4\}} \psi _{f,2}+w_{4,\{1,3,4\}} \psi _{f,3}+v_{4,\{1,4\}} \psi _{f,4} \& -w_{4,\{1,2,4\}} \psi _{f,1}+w_{4,\{2,3,4\}} \psi _{f,3}+v_{4,\{2,4\}} \psi _{f,4}\& \cdots \\[3mm]        
			\phantom{{ }} \& |[color=repIndColor]|\set {3} \& |[color=repIndColor]| \set {4}\&\\
			\phantom{{ }} \& -v_{1,\{1,3\}} \psi _{f,1}-w_{1,\{1,2,3\}} \psi _{f,2}+w_{1,\{1,3,4\}} \psi _{f,4} \& -v_{1,\{1,4\}} \psi _{f,1}-w_{1,\{1,2,4\}} \psi _{f,2}-w_{1,\{1,3,4\}} \psi _{f,3} \\
			\phantom{{ }} \& -w_{2,\{1,2,3\}} \psi _{f,1}-v_{2,\{2,3\}} \psi _{f,2}+w_{2,\{2,3,4\}} \psi _{f,4} \& -w_{2,\{1,2,4\}} \psi _{f,1}-v_{2,\{2,4\}} \psi _{f,2}-w_{2,\{2,3,4\}} \psi _{f,3} \\
			\phantom{{ }} \& -v_{3,\{1,3\}} \psi _{f,1}-v_{3,\{2,3\}} \psi _{f,2}+v_{3,\{3,4\}} \psi _{f,4} \& -w_{3,\{1,3,4\}} \psi _{f,1}-w_{3,\{2,3,4\}} \psi _{f,2}-v_{3,\{3,4\}} \psi _{f,3} \\
			\phantom{{ }} \& -w_{4,\{1,3,4\}} \psi _{f,1}-w_{4,\{2,3,4\}} \psi _{f,2}+v_{4,\{3,4\}} \psi _{f,4} \& -v_{4,\{1,4\}} \psi _{f,1}-v_{4,\{2,4\}} \psi _{f,2}-v_{4,\{3,4\}} \psi _{f,3} \\
		}; 
		\node[left delimiter=[, xshift=5pt, yshift=0pt, fit={(RMAT-2-2)(RMAT-3-2)(RMAT-4-2)(RMAT-5-2)}] (brleft){};
		\node[right delimiter={]}, xshift=0pt, yshift=0pt,fit={(RMAT-7-3)(RMAT-8-3)(RMAT-9-3)(RMAT-10-3)}] (brright){};  
		\node[left = 20pt of brleft]{$\repSpc{f}{2}=$};
		\node[right delimiter={|},fit=(RMAT-2-2)(RMAT-3-2)(RMAT-5-2),xshift=3pt,yshift=0pt,scale=1] {};
		\node[right delimiter={|},fit=(RMAT-7-2)(RMAT-9-2)(RMAT-10-2),xshift=0pt,yshift=0pt,scale=1] {};
		\end{tikzpicture}
		\end{align*}
	\end{footnotesize}
	\vspace{-7mm}
	\hrule
	\vspace{-7mm}
\end{figure*}

Using the entries of matrix $\repSpc{f}{2}$, we can reconstruct the missing coded symbols of the failed node, as formulated in~\eqref{eq:repair}.  For the sake of illustration, let us focus on the symbol at the position $\sen I = \set{2,4}$ of node $f$. Recall that rows of matrix $\repSpc{f}{2}$ are indexed by numbers in $\intv{d}=\set{1,2,3,4}$ and its columns are indexed by subsets of size $m-1=1$ of $\intv{d}$. The $\sen I$-th symbol of node $f$ can be found from a linear combination (with $+1$ and $-1$ coefficients) of entries of $\repSpc{f}{2}$ positioned at row $x$ and column $\sen I\setminus \{x\}$ for all $x\in \sen I$. The coefficients used in this linear combination is given by order of number $x$ in set $\sen I$, e.g., $x=2$ is the first (smallest) element of $\sen I=\set{2,4}$, hence $\ind{\set{2,4}}{2}=1$, and the corresponding coefficient will be $(-1)^{\ind{\set{2,4}}{2}}= (-1)^1=-1$. Putting things together, we obtain
	\begin{align}
	\sum_{x \in \set{2,4}} &(-1)^{\ind{\set{2,4}}{x}}\repSpc{f}{2}_{x,\set{2,4} \setminus \set{x}}=-\repSpc{f}{2}_{2,\set{4}}+\repSpc{f}{m}_{4,\set{2}}\nonumber\\ 
     &=	-\left[-\psi _{f,2} v_{2,\{2,4\}}-\psi _{f,1} w_{2,\{1,2,4\}}-\psi _{f,3} w_{2,\{2,3,4\}}\right] 
     +\left[\psi _{f,4} v_{4,\{2,4\}}-\psi _{f,1} w_{4,\{1,2,4\}}+\psi _{f,3} w_{4,\{2,3,4\}}\right] \nonumber\\
	&=\psi _{f,2} v_{2,\{2,4\}}+\psi_{f,4} v_{4,\{2,4\}}+\psi _{f,1}\left( w_{2,\{1,2,4\}}-w_{4,\{1,2,4\}} \right)
	+\psi_{f,3}\left( w_{2,\{2,3,4\}}+ w_{4,\{2,3,4\}}\right)\nonumber \\
	&=\psi _{f,2} v_{2,\{2,4\}}+\psi_{f,4} v_{4,\{2,4\}}+\psi _{f,1}w_{1,\{1,2,4\}}+\psi_{f,3}w_{2,\{2,3,4\}}\label{eq:R:Example}\\
	&=[\enc_3 \cdot \SM]_{\seq{2,4}},\nonumber
	\end{align}
where in~\eqref{eq:R:Example} we used the parity equations defined in~\eqref{eq:ex:w-par}. A general repair scenario with an arbitrary (not necessarily systematic) set of helper nodes $\cH$ with $|\cH|=d=4$ is very similar to that from the systematic nodes, explained above. 

Each helper node $h\in \cH$ computes its repair data by multiplying its content by the repair encoder matrix of failed node $f$, and sends it to the failed node. The failed node collects  $\left\{\enc_h \cdot \SM \cdot \repMat{f}{2}: h\in \cH \right\}$ and stacks them to form the matrix $\enc[\hlp,:] \cdot \SM \cdot \repMat{f}{2}$, where $\enc[\hlp,:]$ is the sub-matrix of $\enc$ obtained by stacking rows indexed by $\{h:h\in \cH\}$.  The main difference compared to the systematic helpers case is that unlike in \eqref{eq:rep-subspace-systematic}, $\enc[\hlp,:]$ is not an identity matrix in general. However, since $\enc[\hlp,:]$  is an invertible matrix, we can compute $\repSpc{f}{2}=\SM \cdot\repMat{f}{2}$ as
\begin{align*}
\repSpc{f}{2} =\enc[\hlp,:]^{-1} \cdot \left(\enc[\hlp,:] \cdot \SM \cdot \repMat{f}{2} \right)=\SM \cdot \repMat{f}{2}. 
\end{align*}  
Once $\repSpc{f}{2}$ is computed at node $f$, the rest of the process is identical the repair from systematic helper nodes.

\subsection{Multiple Failure Repair}
\label{sec:ex:mulrep}
Now, assume two non-systematic nodes in $\sen E=\{f_1,f_2\}$ are simultaneously failed, and our goal is to reconstruct the missing data on $f_1$ and $f_2$ using the systematic nodes, i.e., the helper set is $\sen H=\{1,2,3,4\}$. 
A naive approach is to repeat the repair scenario discussed in Section ~\ref{sec:ex:1rep} for $f_1$ and $f_2$. Such a separation-based scheme requires downloading  $2\beta=6$ (coded) symbols from each helper node. Alternatively, we show that the repair of nodes $f_1$ and $f_2$ can be performed by downloading only $\beta_2=5$ symbols from each helper. 

We start with $\repMat{\{f_1,f_2\}}{2}$ which is basically the concatenation of $\repMat{f_1}{2}$ and $\repMat{f_2}{2}$:
\begin{align*}
	\begin{tikzpicture}
	\matrix (Xi4) [matrix of math nodes,%
	ampersand replacement=\&,
	column sep=0pt,
	row sep = 3pt,
	column 1/.style={colIndColor,column sep=0pt},
    column 6/.style={column sep=10pt},
	nodes={anchor= center,minimum height=15pt}
	]
	{
		\phantom{{ }}\&\phantom{{ }}\&|[color=repIndColor]|\seq{1} \& |[color=repIndColor]|\seq{2} \& |[color=repIndColor]|\seq{3} \& |[color=repIndColor]|\seq{4} \&|[color=repIndColor]|\seq{1} \& |[color=repIndColor]|\seq{2} \& |[color=repIndColor]|\seq{3} \& |[color=repIndColor]|\seq{4} \\
		\seq{1,2}\& \phantom{{ }} \& \psi _{f_1,2} \& -\psi _{f_1,1} \& 0 \& 0 \& \psi _{f_2,2} \& -\psi _{f_2,1} \& 0 \& 0 \\
		\seq{1,3}\& \phantom{{ }} \&\psi _{f_1,3} \& 0 \& -\psi _{f_1,1} \& 0 \&\psi _{f_2,3} \& 0 \& -\psi _{f_2,1} \& 0\\
		\seq{1,4}\& \phantom{{ }} \&\psi _{f_1,4} \& 0 \& 0 \& -\psi _{f_1,1} \&\psi _{f_2,4} \& 0 \& 0 \& -\psi _{f_2,1} \\
		\seq{2,3}\& \phantom{{ }} \&0 \& \psi _{f_1,3} \& -\psi _{f_1,2} \& 0\&0 \& \psi _{f_2,3} \& -\psi _{f_2,2} \& 0 \\
		\seq{2,4}\& \phantom{{ }} \&0 \& \psi _{f_1,4} \& 0 \& -\psi _{f_1,2} \&0 \& \psi _{f_2,4} \& 0 \& -\psi _{f_2,2} \\
		\seq{3,4}\& \phantom{{ }} \&0 \& 0 \& \psi _{f_1,4} \& -\psi _{f_1,3} \&0 \& 0 \& \psi _{f_2,4} \& -\psi _{f_2,3}\\
	};
	\node[left delimiter=[, xshift=10pt, yshift=1pt, fit={(Xi4-2-2.north west)(Xi4-7-2.south west)}] (brleft){};
	\node[right delimiter={]},fit={($ (Xi4-2-10.north east) + (0pt,0pt)$)($(Xi4-7-10.south east) + (0pt,0pt)$)}] (brright){};  
	\node[left = 0pt of Xi4-4-1]{$\repMat{\cE}{2}=\left[\begin{array}{c|c}\repMat{f_1}{2}&\repMat{f_2}{2} \end{array}\right]=$};
	\draw[dashed] ($ (Xi4-2-6.north east) + (17pt,0pt)$)--($ (Xi4-7-6.south east) + (5pt,0pt)$);
	\end{tikzpicture}
\end{align*}
This is a $6\times 8$ matrix. However, the claim of Proposition~\ref{prop:multi-beta} implies the rank of this matrix is at most $5$. To show this claim, we define the non-zero vector 
\begin{align*}
\begin{tikzpicture}
\matrix (Y2) [matrix of math nodes,%
ampersand replacement=\&,
column sep=-4pt,
row sep = 3pt,
column 1/.style={nulIndColor,column sep=-6pt},
nodes={anchor= center,minimum height=10pt}
]
{
	\phantom{{ }}\&\phantom{{ }}\&|[color=colIndColor]|\seq{1,2} \&|[color=colIndColor]|\seq{1,3} \&|[color=colIndColor]|\seq{1,4}\&|[color=colIndColor]|\seq{2,3}\&|[color=colIndColor]|\seq{2,4}\&|[color=colIndColor]|\seq{3,4}\\
	\seq{3,4}\& \phantom{{ }}  \&	-\begin{vmatrix} \psi_{f_1,3} &  \hspace{-2mm}\psi_{f_1,4} \\  \psi_{f_2,3} &  \hspace{-2mm}\psi_{f_2,4}  \end{vmatrix},\&\begin{vmatrix} \psi_{f_1,2} &  \hspace{-2mm}\psi_{f_1,4} \\  \psi_{f_2,2} &  \hspace{-2mm}\psi_{f_2,4}  \end{vmatrix},\& -\begin{vmatrix} \psi_{f_1,2} &  \hspace{-2mm}\psi_{f_1,3} \\  \psi_{f_2,2} &  \hspace{-2mm}\psi_{f_2,3}  \end{vmatrix} ,\& -\begin{vmatrix} \psi_{f_1,1} &  \hspace{-2mm}\psi_{f_1,4} \\  \psi_{f_2,1} &  \hspace{-2mm}\psi_{f_2,4}  \end{vmatrix} ,\& \begin{vmatrix} \psi_{f_1,1} &  \hspace{-2mm}\psi_{f_1,3} \\  \psi_{f_2,1} &\hspace{-2mm}  \psi_{f_2,3}  \end{vmatrix},\& -\begin{vmatrix} \psi_{f_1,1} &\hspace{-2mm}  \psi_{f_1,2} \\  \psi_{f_2,1} &\hspace{-2mm}  \psi_{f_2,2}  \end{vmatrix}\\
};
\node[left delimiter=[, xshift=10pt, yshift=1pt, fit={(Y2-2-3.north west)(Y2-2-3.south west)}] (brleft){};
\node[right delimiter={]},fit={($ (Y2-2-6.north east) + (-5pt,0pt)$)($(Y2-2-8.south east) + (-5pt,0pt)$)}] (brright){};  
\node (eq) [left = -5pt of Y2-2-1,yshift=1mm]{$\NSpc{\cE}{2}=$};
\matrix (Y3) [matrix of math nodes,anchor = north west, below = of Y2-2-5, xshift=4mm,%
ampersand replacement=\&,
column sep=10pt,
row sep = 3pt,
column 1/.style={nulIndColor,column sep=-26pt},
nodes={anchor= center,minimum height=15pt}
]
{
    \phantom{{ }}\&\phantom{{ }}\&|[color=colIndColor]|\seq{1,2} \&|[color=colIndColor]|\seq{1,3} \&|[color=colIndColor]|\seq{1,4}\\
    \seq{3,4}\& \phantom{{ }} \& \psi _{f_1,4} \psi _{f_2,3}-\psi _{f_1,3} \psi _{f_2,4} \& \psi _{f_1,2} \psi _{f_2,4}-\psi _{f_1,4} \psi _{f_2,2} \&\psi _{f_1,3} \psi _{f_2,2}-\psi _{f_1,2} \psi _{f_2,3} \& \cdots \\[10mm]
    \phantom{{ }}\&\phantom{{ }}\&|[color=colIndColor]|\seq{2,3}\&|[color=colIndColor]|\seq{2,4}\&|[color=colIndColor]|\seq{3,4}\\
    \phantom{{ }} \& \cdots \&  \psi _{f_1,4} \psi _{f_2,1}-\psi _{f_1,1} \psi _{f_2,4} \& \psi _{f_1,1} \psi _{f_2,3}-\psi _{f_1,3} \psi _{f_2,1} \& \psi _{f_1,2} \psi _{f_2,1}-\psi _{f_1,1} \psi _{f_2,2} \\
};
\node[left delimiter=[, xshift=5pt, yshift=1pt, fit={(Y3-2-3.north west)(Y3-2-3.south west)}] (brleft){};
\node[right delimiter={]},fit={($ (Y3-4-5.north east) + (0pt,0pt)$)($(Y3-4-5.south east) + (0pt,0pt)$)}] (brright){}; 
\node[left = -5pt of Y3-2-1,yshift=0mm]{$=$};
\end{tikzpicture}
\end{align*}
and show that this vector is in the left null-space of $\repMat{\{f_1,f_2\}}{2}$. The general construction of the null-space is presented in the proof of Proposition~\ref{prop:multi-beta} in Section~\ref{sec:nddproofs}. 

First note that $\NSpc{\cE}{2}$ is not an all-zero vector, otherwise we have 
\[
\frac{\psi_{f_1,1}}{\psi_{f_2,1}} = \frac{\psi_{f_1,2}}{\psi_{f_2,2}} = \frac{\psi_{f_1,3}}{\psi_{f_2,3}} = \frac{\psi_{f_1,4}}{\psi_{f_2,4}},
\]
which implies rows $\enc_{f_1}$ and $\enc_{f_2}$ of the encoder matrix are linearly dependent. This is in contradiction with the fact that every $d=4$ rows of $\enc$ are linearly independent. Hence, without loss of generality, we may assume 
\begin{align*}
\begin{vmatrix} \psi_{f_1,1} & \psi_{f_1,4} \\ \psi_{f_2,1} & \psi_{f_2,4} \end{vmatrix}\neq 0.
\end{align*} 

In order to prove $\NSpc{\cE}{2} \cdot \repMat{\cE}{2}=0$, we show that vector $\NSpc{\cE}{2}$ is orthogonal to each column of $\repMat{\cE}{2}$. For instance,  consider the seventh column of $\repMat{\cE}{2}$, labeled by $\set{3}$ in segment $\repMat{f_2}{2}$. The inner product of $\NSpc{\cE}{2}$ and this column is given by
\begin{align*}
-\psi_{f_2,1} \begin{vmatrix}\psi_{f_1,2} & \psi_{f_1,4} \\\psi_{f_2,2} & \psi_{f_2,4}\end{vmatrix} +\psi_{f_2,2} \begin{vmatrix}\psi_{f_1,1} & \psi_{f_1,4} \\\psi_{f_2,1} & \psi_{f_2,4}\end{vmatrix}-\psi_{f_2,4} \begin{vmatrix}\psi_{f_1,1} & \psi_{f_1,2} \\\psi_{f_2,1} & \psi_{f_2,2}\end{vmatrix}= - \begin{vmatrix}\psi_{f_2,1} & \psi_{f_2,2} & \psi_{f_2,4}\\ \psi_{f_1,1} & \psi_{f_1,2} & \psi_{f_1,4} \\\psi_{f_2,1} & \psi_{f_2,2} & \psi_{f_2,4}\end{vmatrix}=0,
\end{align*}
where the first equality follows from the Laplace expansion of the determinant with respect to the first row, and the second equality is due to the fact that the first and third rows of the matrix are identical, and hence it is rank-deficient. The existence of a non-zero vector in the left null-space of $\repMat{\cE}{2}$ implies its rank is at most $\beta_2^{(2)}=5$. 

Now, assume $h=1$ is one of the helper nodes. Without loss of generality, we may assume $\psi_{f_1,1}\neq 0$ and $\psi_{f_2,1} \neq 0$. The repair data sent by node $1$, i.e., $\enc_{1} \cdot \SM \cdot \repMat{\cE}{2}$, has the following $8$ symbols:
\begin{align}
\left\{
\begin{array}{l}
\mathsf{Symbol\;}1:\psi _{f_1,2} v_{1,\{1,2\}} + \psi _{f_1,3} v_{1,\{1,3\}} + \psi _{f_1,4} v_{1,\{1,4\}},\\
\mathsf{Symbol\;}2:-\psi _{f_1,1} v_{1,\{1,2\}}+ \psi _{f_1,3} w_{1,\{1,2,3\}} + \psi _{f_1,4} w_{1,\{1,2,4\}}, \\
\mathsf{Symbol\;}3:-\psi _{f_1,1} v_{1,\{1,3\}} - \psi _{f_1,2} w_{1,\{1,2,3\}}+ \psi _{f_1,4} w_{1,\{1,3,4\}}, \\
\mathsf{Symbol\;}4:-\psi _{f_1,1} v_{1,\{1,4\}} - \psi _{f_1,2} w_{1,\{1,2,4\}} - \psi _{f_1,3} w_{1,\{1,3,4\}}\\
\mathsf{Symbol\;}5:\psi _{f_2,2} v_{1,\{1,2\}} + \psi _{f_2,3} v_{1,\{1,3\}} + \psi _{f_2,4} v_{1,\{1,4\}},\\
\mathsf{Symbol\;}6:-\psi _{f_2,1} v_{1,\{1,2\}}+ \psi _{f_2,3} w_{1,\{1,2,3\}} + \psi _{f_2,4} w_{1,\{1,2,4\}}, \\
\mathsf{Symbol\;}7:-\psi _{f_2,1} v_{1,\{1,3\}} - \psi _{f_2,2} w_{1,\{1,2,3\}}+ \psi _{f_2,4} w_{1,\{1,3,4\}}, \\
\mathsf{Symbol\;}8:-\psi _{f_2,1} v_{1,\{1,4\}} - \psi _{f_2,2} w_{1,\{1,2,4\}} - \psi _{f_2,3} w_{1,\{1,3,4\}},\\
\end{array}
\right.
\label{eq:ex:rep-symbs}
\end{align}
However, we claim that $\mathsf{Symbol\;}1$, $\mathsf{Symbol\;}5$, and $\mathsf{Symbol\;}8$ are redundant, and can be reconstructed as linear combinations of the remaining five symbols. This can be verified by 
\begin{align*}
\mathsf{Symbol\;} 1 &=  -\frac{\psi _{f_1,2}}{\psi _{f_1,1}} \times \mathsf{Symbol\;}2 
 					   -\frac{\psi _{f_1,3}}{\psi _{f_1,1}} \times \mathsf{Symbol\;}3 
					   -\frac{\psi _{f_1,4}}{\psi _{f_1,1}} \times \mathsf{Symbol\;}4 
					   \\
\mathsf{Symbol\;} 5 &=  -\frac{\psi _{f_2,2}}{\psi _{f_2,1}} \times \mathsf{Symbol\;}6 
 					   -\frac{\psi _{f_2,3}}{\psi _{f_2,1}} \times \mathsf{Symbol\;}7 
					   -\frac{\psi _{f_2,4}}{\psi _{f_2,1}} \times \mathsf{Symbol\;}8. 	\\		
\mathsf{Symbol\;}8 &= \frac{\psi _{f_2,1} \left(\psi _{f_1,1} \psi _{f_2,2}-\psi _{f_1,2} \psi _{f_2,1}\right)}{\psi _{f_1,1} \left(\psi _{f_1,1} \psi _{f_2,4}-\psi _{f_1,4} \psi _{f_2,1}\right)}  \times \mathsf{Symbol\;}2
+\frac{\psi _{f_2,1} \left(\psi _{f_1,1} \psi _{f_2,3}-\psi _{f_1,3} \psi _{f_2,1}\right)}{\psi _{f_1,1} \left(\psi _{f_1,1} \psi _{f_2,4}-\psi _{f_1,4} \psi _{f_2,1}\right)} \mathsf{Symbol\;}3\\
&\phantom{=} +
\frac{\psi _{f_2,1}}{\psi _{f_1,1}} \mathsf{Symbol\;}4 +\frac{\psi _{f_1,1} \psi _{f_2,2}-\psi _{f_1,2} \psi _{f_2,1}}{\psi _{f_1,4} \psi _{f_2,1}-\psi _{f_1,1} \psi _{f_2,4}} \mathsf{Symbol\;}6+
\frac{\psi _{f_1,1} \psi _{f_2,3}-\psi _{f_1,3} \psi _{f_2,1}}{\psi _{f_1,4} \psi _{f_2,1}-\psi _{f_1,1} \psi _{f_2,4}} \mathsf{Symbol\;}7					  
\end{align*}
It is worth noting that the first and second equations above indicate the  dependencies between symbols that are sent for the repair of $f_1$ and $f_2$,  respectively (similar to the \eqref{eq:zero-lin-comb}). The third equation, however, shows an additional dependency across the repair symbols $f_1$ and and those of $f_2$. This implies that it suffices for the helper node $1$ to send symbols $2$, $3$, $4$, $6$, and $7$ in \eqref{eq:ex:rep-symbs} to repair two nodes $f_1$ and $f_2$, simultaneously.

\section{Proofs of the  Code Properties}
\label{sec:nddproofs}

In this section, we provide formal proofs for the exact-repair property of the code stated in Propositions~\ref{prop:node:rep}. We also prove the bounds on the repair bandwidth for single and multiple node failures, presented in Propositions~\ref{lm:beta} and Propositions~\ref{prop:multi-beta}, respectively. 
\begin{proof}[Proof of proposition~\ref{prop:node:rep}]
The proof technique here is similar to that used in  \eqref{eq:R:Example}. We start with the RHS of \eqref{eq:repair}, and plugin the entries of matrices  $\repSpc{f}{m}$, $\repMat{f}{m}$ and $\SM$, to expand it. Next, we split the terms in the summation into $v$-symbols ($\SM_{x,\sen I}$ with $x\in \sen I$) and $w$-symbols ($\SM_{y,\sen I}$ with $y\notin \sen I$), and then we prove the identity for $v$ and $w$ symbols separately. The details of proof are as follows.
	\begin{align}
	\sum_{x \in \sen I} (-1)^{\ind{\sen I}{x}}& \left[\repSpc{f}{m}\right]_{x,\sen I \setminus \seq{x}} = \sum_{x \in \sen I} (-1)^{\ind{\sen I}{x}} \left[\SM \cdot \repMat{f}{m} \right]_{x,\sen I \setminus \seq{x}} \nonumber\\ 
	&= \sum_{x \in \sen I} (-1)^{\ind{\sen I}{x}} \sum_{\stackrel{\sen L\subseteq \intv{d}}  {\size{\sen L}=m}}\SM_{x,\sen L}  \cdot\repMat{f}{m}_{\sen L ,\sen I \setminus \seq{x}} \nonumber\\ 
	&= \sum_{x \in \sen I} (-1)^{\ind{\sen I}{x}} \sum_{ y\in \intv{d}\setminus{(\sen I \setminus{\set{x}})} } \SM_{x,\sen (\sen I \setminus{\set{x}}) \cup \set{y}}  \cdot \repMat{f}{m}_{\sen (\sen I \setminus{\set{x}}) \cup \set{y},\sen I \setminus \seq{x}} \label{eq:rep:iden:3}\\ 
	&= \sum_{x \in \sen I} (-1)^{\ind{\sen I}{x}} \left[ \SM_{x,\sen I}   \repMat{f}{m}_{\sen I ,\sen I \setminus \seq{x}} + \sum_{{y\in \intv{d}\setminus{\sen I}} }\SM_{x,\sen (\sen I \setminus{\set{x}}) \cup \set{y}}  \cdot\repMat{f}{m}_{\sen (\sen I \setminus{\set{x}}) \cup \set{y} ,\sen I \setminus \seq{x}}\right] \label{eq:rep:iden:4}\\ 
	&= \sum_{x \in \sen I} (-1)^{\ind{\sen I}{x}} \left[ (-1)^{\ind{\sen I}{x}} \psi_{f,x}\SM_{x,\sen I} + \sum_{y\in \intv{d}\setminus{\sen I} }(-1)^{\ind{(\sen I \setminus \seq{x})\cup \set{y}}{y} }\psi_{f,y} \SM_{x,\sen (\sen I \setminus{\set{x}}) \cup \set{y}}\right] \label{eq:rep:iden:6}  \\ 
	&= \sum_{x \in \sen I} \psi_{f,x}\SM_{x,\sen I} + \sum_{y\in \intv{d}\setminus{\sen I} } \sum_{x \in \sen I}  (-1)^{\ind{\sen I}{x}} (-1)^{\ind{(\sen I \setminus \seq{x})\cup \set{y}}{y} }  \psi_{f,y} \SM_{x,\sen (\sen I \setminus{\set{x}}) \cup \set{y}}   \label{eq:rep:iden:7}\\ 
	&= \sum_{x \in \sen I} \psi_{f,x}\SM_{x,\sen I} + \sum_{y\in \intv{d}\setminus{\sen I}} \sum_{x \in \sen I}  \psi_{f,y} (-1)^{\ind{\sen I \cup \set{y}}{y}+1}  (-1)^{\ind{\sen I \cup \set{y}}{x}}  \SM_{x,\sen (\sen I \setminus{\set{x}}) \cup \set{y}} \label{eq:rep:iden:8}  \\ 
    &= \sum_{x \in \sen I} \psi_{f,x}\SM_{x,\sen I} +\sum_{y\in \intv{d}\setminus{\sen I} }(-1)^{\ind{\sen I \cup \set{y}}{y}+1} \psi_{f,y}\sum_{x \in \sen I} (-1)^{\ind{\sen I \cup \set{y}}{x}}  w_{x,\sen I \cup \set{y}}  \label{eq:rep:iden:10} \\
    &= \sum_{x \in \sen I} \psi_{f,x}\SM_{x,\sen I} +\sum_{y\in \intv{d}\setminus{\sen I} }(-1)^{\ind{\sen I \cup \set{y}}{y}+1} \psi_{f,y} \left[ (-1)^{\ind{\sen I \cup \set{y}}{y}+1}  w_{y,\sen I \cup \set{y}} \right] \label{eq:rep:iden:11} \\
	&= \sum_{x \in \sen I} \psi_{f,x}\SM_{x,\sen I} +\sum_{y\in \intv{d}\setminus{\sen I} }(-1)^{\ind{\sen I \cup \set{y}}{y}+1} \psi_{f,y} \left[(-1)^{\ind{\sen I \cup \set{y}}{y}+1}  \SM_{y,\sen I } \right]   \label{eq:rep:iden:12} \\
	&= \sum_{x \in \sen I} \psi_{f,x}\SM_{x,\sen I} +\sum_{y\in \intv{d}\setminus{\sen I} }\psi_{f,y} \SM_{y,\sen I} \label{eq:rep:iden:13} \\
	&= \sum_{x \in \intv{d}} \psi_{f,x}\SM_{x,\sen I} =
	\left[\psi_f \SM \right]_{\sen I}
	\end{align}
where in above equations we used the following facts.
\begin{itemize}
	\item In~\eqref{eq:rep:iden:3}, we used the definition of $\repMat{f}{m}$ in \eqref{eq:def:xi} which implies  $\repMat{f}{m}_{\sen L, \sen I \setminus\set{x}}$ is non-zero only if $\sen L$ includes $\sen I \setminus\set{x}$;
    \item In~\eqref{eq:rep:iden:4}, we split the summation into two cases:  $y=x$ and  $ y \neq x$;
	\item In~\eqref{eq:rep:iden:6}, we replaced  $\repMat{f}{m}_{\sen I ,\sen I \setminus \seq{x}}$ by $(-1)^{\ind{\sen I}{x}} \psi_{f,x}$ from its definition in \eqref{eq:def:xi}; 
 \item In~\eqref{eq:rep:iden:7}, the two summations over $x$ and $y$ are swapped;  
   \item In~\eqref{eq:rep:iden:8}, we used the identity $(-1)^{\ind{(\sen I \setminus \seq{x})\cup \set{y}}{y} }  (-1)^{\ind{\sen I}{x}}= (-1)^{\ind{\sen I \cup \set{y}}{y}+1}  (-1)^{\ind{\sen I \cup \set{y}}{x}}$. In order to prove the identity, we may consider two cases: 
   \begin{itemize}
       \item If $x<y$, then
        \begin{align*}
		\left\{        
		\begin{array}{l}
        \ind{(\sen I \setminus \seq{x})\cup \set{y}}{y} = \ind{\sen I \cup \set{y}}{y}-1, \\
        \ind{\sen I}{x} =  \ind{\sen I \cup \set{y}}{x}. 
		\end{array}
		\right.
        \end{align*}
       \item If $x > y$, then 
        \begin{align*}
		\left\{        
		\begin{array}{l}
        \ind{(\sen I \setminus \seq{x})\cup \set{y}}{y} = \ind{\sen I \cup \set{y}}{y}, \\
        \ind{\sen I}{x} =  \ind{\sen I \cup \set{y}}{x}-1. 
		\end{array}
		\right.
        \end{align*}       
   \end{itemize}
   \item In~\eqref{eq:rep:iden:10}, since $x\notin (\sen I\setminus \set{x}) \cup \set{y}$ then $\SM_{x,(\sen I\setminus \set{x}) \cup \set{y}}=w_{x,\sen I \cup \set{y}}$;
   \item In~\eqref{eq:rep:iden:11}, we used the parity equation~\eqref{eq:parityeq}. In particular, we have $\sum_{x \in \sen I \cup \set{y}}(-1)^{\ind{\sen I \cup \set{y}}{x}} w_{x,I \cup \set {y}} = 0$,  which implies  $\sum_{x \in \sen I } (-1)^{\ind{\sen I \cup \set{y}}{x}}w_{x,I \cup \set {y}}=- (-1)^{\ind{\sen I \cup \set{y}}{y}}w_{y,I \cup \set {y}}$. 
\end{itemize}
This completes the proof. 
 \end{proof}

\begin{proof}[Proof of Proposition~\ref{lm:beta}]
      In order to show that the repair bandwidth constraint is fulfilled, we need to show  that the rank of matrix $\repMat{f}{m}$ is at most $\beta^{(m)}=\binom{d-1}{m-1}$. First, note that it is easy to verify the claim for $m=1$, since the matrix $\repMat{f}{1}$   has only one column labeled by $\varnothing$ and hence its rank  is at most $1=\binom{d-1}{1-1}$. For $m>1$, we  partition the columns of the matrix into $2$ disjoint groups of size $\beta^{(m)}=\binom{d-1}{m-1}$ and $\binom{d}{m-1}-\beta^{(m)} = \binom{d-1}{m-2}$, and show that each column in the second group can be written as a linear combination of the columns in the first group. This implies that the rank of the matrix does not exceed the number of columns in the first group, which is exactly $\beta^{(m)}$. 
   
To form the groups, we pick some $x\in \intv{d}$ such that\footnote{Note that such an $x$ exists, otherwise the $f$-th row $\enc$ will be zero, which is in contradiction with the fact that every $d$ rows of $\enc$ are linearly independent.} $\psi_{f,x}\neq 0$. Then the first group is the set of all columns whose label is a subset of $\intv{d}\setminus\{x\}$. Recall that columns of $\repMat{f}{m}$ are labeled with $(m-1)$-element subsets of $\intv{d}$. Hence, the number of columns in the first group is $\binom{d-1}{m-1}$. Then, the second group is formed by those columns for which $x$ appears in their label. 

Without loss of generality, we may assume $x=d$, i.e $\psi_{f,d}\neq 0$, and hence the first group consists of columns $\sen I$ such that $\sen I \subset \intv{d-1}$, and the second group includes those $\sen I$'s such that $d\in \sen I$.  For every  $\sen I$ with $d\in \sen I$, we claim that
    \begin{align}
\repMat{f}{m}_{\bLozenge,\sen I } = (-1)^m \psi_{f,d}^{-1} \sum_{y\in \intv{d-1} \setminus \sen J} (-1)^{\ind{\sen J \cup \set{y}}{y}} \psi_{f,y} \repMat{f}{m}_{\bLozenge,\sen J \cup \set{y}},
\label{eq:claim:lm1}
    \end{align} 
    where $\sen J = \sen I \setminus \{d\}$. Note that all the columns appear in the RHS of \eqref{eq:claim:lm1} belong to the first group. Given the facts that $|\sen I|=m-1$ and $\ind{\sen I}{d} = m-1$, the equation in \eqref{eq:claim:lm1} is equivalent to
    \begin{align}
    \sum_{y\in \intv{d} \setminus \sen J} (-1)^{\ind{\sen J \cup \set{y}}{y}} \psi_{f,y} \repMat{f}{m}_{\bLozenge,\sen J \cup \set{y}} = 0.
    \label{eq:claim:lm2}
    \end{align}
Let us focus on an arbitrarily chosen row of the matrix, labeled by $\sen L$, where $\sen L \subseteq \intv{d}$ with $|\sen L|=m$. The $\sen L$-th entry of the column in the LHS of \eqref{eq:claim:lm2} is given by 
    \begin{align*}
    \left[\sum_{y\in \intv{d} \setminus \sen J} (-1)^{\ind{\sen J \cup \set{y}}{y}} \psi_{f,y} \repMat{f}{m}_{\bLozenge,\sen J \cup \set{y}}\right]_{\sen L} &=\sum_{y\in \intv{d} \setminus \sen J} (-1)^{\ind{\sen J \cup \set{y}}{y}} \psi_{f,y} \repMat{f}{m}_{\sen L,\sen J \cup \set{y}}.
    \end{align*}
First assume $\sen J \not\subset \sen L$. This together with the definition of $\repMat{f}{m}$ in \eqref{eq:def:xi} imply that  $\repMat{f}{m}_{\sen L,\sen J \cup \set{y}}=0$ for any $y$, and hence all the terms in the LHS of \eqref{eq:claim:lm2} are zero. 

Next, consider an $\sen L$ such that $\sen J \subseteq \sen L$. Since $|\sen J|=|\sen I \setminus \set{d}|=m-2$ and $|\sen L|=m$, we have $\sen L = \sen J \cup \{y_1, y_2\}$, where $y_1<y_2$ are  elements of $\intv{d}$. Note that for $y\notin \{y_1,y_2\}$ we have $\repMat{f}{m}_{\sen L,\sen J \cup \set{y}}=0$, since $\sen J \cup \set{y} \not\subset \sen L$. Therefore,  \eqref{eq:claim:lm2} can be simplified as
    \begin{align}
    \Bigg[\sum_{y\in \intv{d} \setminus \sen J} & (-1)^{\ind{\sen J \cup \set{y}}{y}} \psi_{f,y} \repMat{f}{m}_{\bLozenge,\sen J \cup \set{y}}\Bigg]_{\sen L} =\sum_{y\in \{y_1, y_2\}} (-1)^{\ind{\sen J \cup \set{y}}{y}} \psi_{f,y} \repMat{f}{m}_{\sen L,\sen J \cup \set{y}}\label{eq:claim:lm3-1} \\
    &= (-1)^{\ind{\sen J\cup \set{y_1}}{y_1}}\psi_{f,y_1} \repMat{f}{m}_{\sen L,\sen J \cup \set{y_1}}+(-1)^{\ind{\sen J\cup \set{y_2}}{y_2}}\psi_{f,y_2} \repMat{f}{m}_{\sen L,\sen J \cup \set{y_2}}\nonumber\\
    &=(-1)^{\ind{\sen J\cup \set{y_1}}{y_1}}\psi_{f,y_1}\cdot (-1)^{\ind{\sen L}{y_2}} \psi_{f,y_2}+(-1)^{\ind{\sen J\cup \set{y_2}}{y_2}}\psi_{f,y_2}\cdot (-1)^{\ind{\sen L}{y_1}} \psi_{f,y_1}
    \label{eq:claim:lm3-2}\\
    &=\psi_{f,y_1} \psi_{f,y_2} \left[(-1)^{\ind{\sen L\setminus \set{y_2}}{y_1}+\ind{\sen L}{y_2}}+(-1)^{\ind{\sen L\setminus \set{y_1}}{y_2}+\ind{\sen L}{y_1}}\right]\nonumber\\
    &=0, \label{eq:claim:lm3-3}  
    \end{align}
where \eqref{eq:claim:lm3-1} and \eqref{eq:claim:lm3-2} is due to the definition of $\repMat{f}{m}$ in \eqref{eq:def:xi}, and in \eqref{eq:claim:lm3-3}  we used the facts that $\ind{\sen L\setminus \set{y_2}}{y_1}= \ind{\sen L}{y_1}$ and $\ind{\sen L\setminus \set{y_1}}{y_2}= \ind{\sen L}{y_2}-1$ which holds for $y_1,y_2\in \sen L$ with $y_1<y_2$. This completes the proof. 
\end{proof}

\begin{proof}[Proof of Proposition~\ref{prop:multi-beta}]
The claim of this proposition for $e>d$ is equivalent to bounding the number of repair symbols from the helper node by $\alpha=\binom{d}{m}$ (because $\binom{d-e}{m}=0$), which is clearly true since each storage node does not store more than $\alpha$ symbols. Thus, we can limit our attention to  $e\leq d$. In order to prove the claim, we show that the row rank of $\repFam{\sen E}{m}$ does not exceed $\binom{d}{m} - \binom{d-e}{m}$. Recall that $\repFam{\sen E}{m}$ is a $\binom{d}{m} \times e\binom{d}{m-1}$ matrix, and it suffices to identify $\binom{d-e}{m}$ linearly independent vectors in the \emph{left null-space} of $\repFam{\sen E}{m}$. To this end, we introduce a full-rank matrix $\NSpc{\sen E}{m}$ of size $\binom{d-e}{m} \times \binom{d}{m}$ and show that $\NSpc{\sen E}{m} \cdot  \repFam{\sen E}{m} = \mathbf{0}$.

\textbf{Step 1 (Construction of $\NSpc{\sen E}{m}$):} 
Let $\enc[\sen E, :]$ be the $e \times d$ matrix obtained from the rows $f_1,f_2,\cdots,f_e$ of the encoder matrix $\enc$. Recall that $\enc[\sen E, :]$ is full-rank (since any $d$ rows of $\enc$ are linearly independent, and $e\leq d$). Hence, there exists a subset $Q$ with $|\sen Q| = e$ of the columns of $\enc[\sen E, :]$, denoted by $\enc[\sen E,\sen Q]$ such that $\det{\enc[\sen E, \sen Q]} \neq 0$. 

The desired matrix $\NSpc{\sen E}{m}$ is of size $\binom{d-e}{m} \times \binom{d}{m}$. We label its rows by $m$-element subsets of $\intv{d} \setminus \sen Q$, and its columns by $m$-element subsets of $\intv{d}$. Then the entry at row $\sen I$ and column $\sen L$ is defined as
		\begin{align}
		\NSpc{\sen E}{m}_{\sen I,\sen L} = \begin{cases}
		(-1)^{\sum_{j\in \sen L} \ind{\sen I \cup \sen Q}{j}}\det{\enc[\sen E, \left(\sen I \cup \sen Q \right) \setminus \sen L]} & \textrm{if $\sen L \subseteq \sen I \cup \sen Q$}, \\
		0 & \textrm{if $\sen L\nsubseteq \sen I \cup \sen Q$}.
		\end{cases} \label{eq:NSpc:def}
		\end{align}
		Note that $\sen I \subseteq \intv{d} \setminus \sen Q$, and hence $|\sen I \cup \sen Q|=m+e$. 

 \textbf{Step 2 (Orthogonality of $\NSpc{\sen E}{m}$ to $\repFam{\sen E}{m}$):} We prove this claim for each segment of $\repFam{\sen E}{m}$. More precisely, for each $f\in \sen E$, we prove $\NSpc{\sen E}{m} \cdot \repMat{f}{m} = \mathbf{0}$. Consider some $f\in \sen E$, and arbitrary indices $\sen I$ and $\sen J$ for  rows and columns, respectively. We have 
		\begin{align}
			\left[\NSpc{\sen E}{m} \cdot \repMat{f}{m}\right]_{\sen I,\sen J} &= \sum_{\substack{ \sen L \subseteq \intv d \\ \size{\sen L}=m}}\NSpc{\sen E}{m}_{\sen I ,\sen L} \cdot \repMat{f}{m}_{\sen L,\sen J}\\
			&= \sum_{x  \in \intv{d} \setminus \sen J }\NSpc{\sen E}{m}_{\sen I,\sen J \cup \set{x} } \cdot\repMat{f}{m}_{\sen J \cup \set{x},\sen J}\label{eq:xirank:1}\\
			&= \sum_{x  \in \left( \sen I \cup \sen Q \right) \setminus \sen J }\NSpc{\sen E}{m}_{\sen I,\sen J \cup \set{x} } \cdot\repMat{f}{m}_{\sen J \cup \set{x},\sen J}\label{eq:xirank:1-2}\\
			&= \sum_{x  \in \left( \sen I \cup \sen Q \right) \setminus \sen J  } \left[(-1)^{\sum_{j\in \sen J \cup \set{x}} \ind{\sen I \cup \sen Q}{j} } \det{\enc[\sen E, \left(\sen I \cup \sen Q \right) \setminus \left(\sen J \cup \set{x}\right)]} \right] \left[(-1)^{\ind{\sen J \cup \set{x}}{x}} \psi_{f,x}\right]\label{eq:xirank:3}\\
			&= (-1)^{\sum_{j\in \sen J} \ind{\sen I \cup \sen Q}{j}-1} \sum_{x  \in \left( \sen I \cup \sen Q \right) \setminus \sen J  } (-1)^{\ind{\left(\sen I \cup \sen Q \right) \setminus \sen J}{x}} \det{\enc[\sen E, \left(\sen I \cup \sen Q \right) \setminus \left(\sen J \cup \set{x}\right)]} \psi_{f,x}\label{eq:xirank:4}\\			
					&= (-1)^{\sum_{j\in \sen J} \ind{\sen I \cup \sen Q}{j} -1} \quad \det{ \left[ \begin{array}{c}\enc[f, \left(\sen I \cup \sen Q\right) \setminus\sen J] \\ \hline \enc[\sen E,\left(\sen I \cup \sen Q \right)\setminus \sen J] \end{array} \right]}\label{eq:xirank:5}\\
			&=0. \label{eq:xirank:6}
		\end{align}
Note that
	\begin{itemize}
 \item In \eqref{eq:xirank:1} we have used the fact that  $\repMat{f}{m}$ is non-zero only if  $ \sen L= \sen J  \cup  \set{x}$ for some $x\in \intv{d}\setminus \sen J$ (see  \eqref{eq:def:xi}). 

\item  The range of $x$ is further limited in \eqref{eq:xirank:1-2} due to the fact that $\NSpc{\sen E}{m}_{\sen I,\sen J \cup \set{x} }$ is non-zero only if $\sen J \cup \set{x} \subseteq \sen I \cup \sen Q$  (see \eqref{eq:NSpc:def}), which implies $x \in \left(\sen I \cup \sen Q \right) \setminus \sen J$.
\item In \eqref{eq:xirank:3}, the entries of the matrix product are replaced by their definitions from \eqref{eq:def:xi} and \eqref{eq:NSpc:def}.
		
\item The equality in~\eqref{eq:xirank:4} is obtained by factoring $(-1)^{\sum_{j\in \sen J} \ind{\sen I \cup \sen Q}{j}}$, and using the facts that $\sen J \subset \sen I \cup \sen Q$ and 
\begin{align*}
\ind{\sen I \cup \sen Q}{x} + \ind{\sen J \cup \{x\}}{x} &\equiv 
\ind{\sen I \cup \sen Q}{x} - \ind{\sen J \cup \{x\}}{x} \qquad \qquad (\mathsf{mod\;\;} 2) \nonumber\\
&= 
\left|\left\{y\in \sen I \cup \sen Q: y\leq x\right\}\right| - \left|\left\{y\in \sen J \cup \{x\}: y\leq x\right\}\right|\nonumber\\
&=
\left|\left\{y\in \sen I \cup \sen Q: y\leq x\right\}\right| - \left|\left\{y\in \sen J : y\leq x\right\}\right| -1\nonumber\\
&= \left|\left\{y\in (\sen I \cup \sen Q)\setminus \sen J: y\leq x\right\}\right|  -1
= \ind{(\sen I \cup \sen Q)\setminus \sen J}{x}  -1.
		\end{align*}

		\item The equality in~\eqref{eq:xirank:5} follows the determinant expansion of the matrix with respect to its first row. Note that $|(\sen I \cup \sen Q)\setminus \sen J|=e+1$, and hence it is a square matrix. 
		
		\item Finally, the determinant in ~\eqref{eq:xirank:5} is zero, because $f\in \sen E$, and hence the matrix has two identical rows. 
	\end{itemize}

 \textbf{Step 3 (Full-rankness of $\NSpc{\sen E}{m}$):} Recall that rows and columns of $\NSpc{\sen E}{m}$ are labeled by $m$-element subsets of  $\intv{d}\setminus \sen Q$ and $m$-element subsets of $\intv{d}$, respectively. Consider the sub-matrix of $\NSpc{\sen E}{m}$, whose column labels are subsets of $\intv{d}\setminus \sen Q$. This is $\binom{d-e}{m}\times \binom{d-e}{m}$ square sub-matrix. Note that for entry at position $(\sen I, \sen J)$ with $\sen I \neq \sen J$, since $\sen J \cap \sen Q = \emptyset$, we have $\sen J \nsubseteq \sen I \cup \sen Q$, and hence $\left[\NSpc{\sen E}{m}\right]_{\sen I, \sen J} = 0$  (see \eqref{eq:NSpc:def}). Otherwise, if $\sen I = \sen J$ we have $\left[\NSpc{\sen E}{m}\right]_{\sen I, \sen J} = (-1)^{\sum_{i\in \sen I} \ind{\sen I \cup \sen Q}{i}} \det{\enc[\sen E, \sen Q]}$. That is
	\begin{align*}
	 	\left[\NSpc{\sen E}{m}\right]_{\sen I, \sen J} = \begin{cases} 
	 	(-1)^{\sum_{i\in \sen I} \ind{\sen I \cup \sen Q}{i}} \det{\enc[\sen E, \sen Q]} & \textrm{if $\sen I = \sen J$},\\
	 	0 & \textrm{if  $\sen I \neq \sen J$}.
	 	\end{cases} 
	\end{align*}
This implies that $\NSpc{\sen E}{m}$ has a diagonal sub-matrix, with  diagonal entries  $\pm \det{\enc[\sen E, \sen Q]}$ which are non-zero (see \textbf{Step 1}), and hence  $\NSpc{\sen E}{m}$ is a full-rank matrix. Therefore, the rows of $\NSpc{\sen E}{m}$ provide  $\binom{d-e}{m}$ linearly independent vectors in the left null-space of $\repMat{\sen E}{m}$, and thus the rank of $\repMat{\sen E}{m}$ does not exceed $\beta^{(m)} = \binom{d}{m}-\binom{d-e}{m}$. This completes the proof. 
\end{proof}

\section{Improved Centralized Repair for Multiple Failures}
\label{sec:MulRep-improved}
In this section, we prove Theorem~\ref{thm:MulRep-improved}. As mentioned before, this result is essentially obtained from Theorem~\ref{thm:MulRep}, by exploiting the fact that in the centralized repair setting once one failed node is repaired, it can also participate in the repair process of the remaining failed nodes. 

\begin{proof}[Proof of  Theorem~\ref{thm:MulRep-improved}]
Consider a set of $e$ failed nodes  $\mathcal{E} = \set{f_1,f_2,\cdots, f_e}$, which are going to be repaired by a set of helper nodes $\cH$ with $|\cH|=d$. Recall from Theorem~\ref{thm:MulRep} that the repair data of node $h$ intended for a failed node $f$ (i.e., $\enc_h \cdot \SM \cdot \repMat{f}{m}$) can be retrieved from the repair data that $h$ sends for the repair of a set of failed nodes $\mathcal{E}$ (i.e., $\enc_h \cdot \SM \cdot \repMat{\sen E}{m}$) where $f\in \mathcal{E}$. We use the following procedure in order to repair the failed nodes in $\sen E$. 
\begin{enumerate}
	\item First node $f_1$ is repaired using helper nodes $\cH=\{h_1, h_2, \cdots, h_d\}$. 
	\item Having failed $\{f_1,\dots, f_{i}\}$ repaired, the repair process of failed node $f_{i+1}$ is performed using helper nodes $\cH_i = \{f_1,\dots, f_{i}\} \cup \{h_{i+1},h_{i+2},\dots, h_{d}\}$. This step will be repeated for $i=2,3,\dots, e$. 
\end{enumerate}

Note that the proposed repair process introduced in this paper is helper-independent, and hence the repair data sent to a failed node $f_i$ by a helper node $h$ does not depend on the identity of the other helper nodes. 

Using the procedure described above, helper node $h_i$  only participates in the repair of failed nodes $\{f_1,f_2,\dots, f_i\}$, for $i=1,2,\dots, e$, while the other helper nodes (i.e., $h_i$ for $i=e+1,\dots, d$) contribute in the repair of all the $e$ failed nodes. Hence, the total repair data downloaded from the helper nodes in $\cH$ to the central repair unit is given by 
\begin{align}
	\sum_{j=1}^{e} \beta_{j}^{(m)} +  (d-e) \beta_e^{(m)} &=   \sum_{j=1}^{e} \left[\binom{d}{m}-\binom{d-j}{m} \right] + (d-e) \left[\binom{d}{m}-\binom{d-e}{m} \right]\\
	 &= d \binom{d}{m}-\sum_{j=1}^{e} \binom{d-j}{m} -(d-e)\binom{d-e}{m}\\
	 &= d \binom{d}{m}-\left[\binom{d}{m+1}-\binom{d-e}{m+1} \right] - (d-e)\binom{d-e}{m} \label{eq:iden:x1}\\
	 &= (d+1) \binom{d}{m}-\left[\binom{d}{m}+\binom{d}{m+1}\right]-(d-e+1)\binom{d-e}{m}+\left[\binom{d-e}{m}+\binom{d-e}{m+1}\right]\\
	 &= (m+1) \binom{d+1}{m+1}-\binom{d+1}{m+1}-(m+1)\binom{d-e+1}{m+1}+\binom{d-e+1}{m+1}\label{eq:iden:x2}\\
	 &=m\left[\binom{d+1}{m+1}-\binom{d-e+1}{m+1}\right],
\end{align}
where 
\begin{itemize}
\item in \eqref{eq:iden:x1} we used the identity 
\begin{align*}
\sum_{j=a}^{b} \binom{j}{m} = \sum_{j=m}^{b} \binom{j}{m} - \sum_{j=m}^{a-1} \binom{j}{i}=\binom{b+1}{m+1}-\binom{a}{m+1},
\end{align*}
\item and the equality in  \eqref{eq:iden:x2} holds due to the Pascal identity  $\binom{a}{m} + \binom{a}{m+1} = \binom{a+1}{m+1}$, and the fact that $(a+1)\binom{a}{m} = (m+1) \binom{a+1}{m+1}$. 
\end{itemize}

Hence the average (per helper node) repair bandwidth is given by 
\begin{align}
 \bar{\beta}_e^{(m)} = \frac{1}{d} \left[m\binom{d+1}{m+1}-m\binom{d-e+1}{m+1}\right].
\end{align}

Note that the repair strategy described here is asymmetric, i.e., the repair bandwidth of helper nodes are different. However, it can be simply symmetrized by concatenating $d$ copies of the code to form a super code 
\begin{align*}
\sen C = \enc \cdot \begin{bmatrix}
\SM^{[1]}, \SM^{[2]}, \dots, \SM^{[d]}
\end{bmatrix}.
\end{align*}

\begin{figure}
\centering
\includegraphics[width=0.8\textwidth]{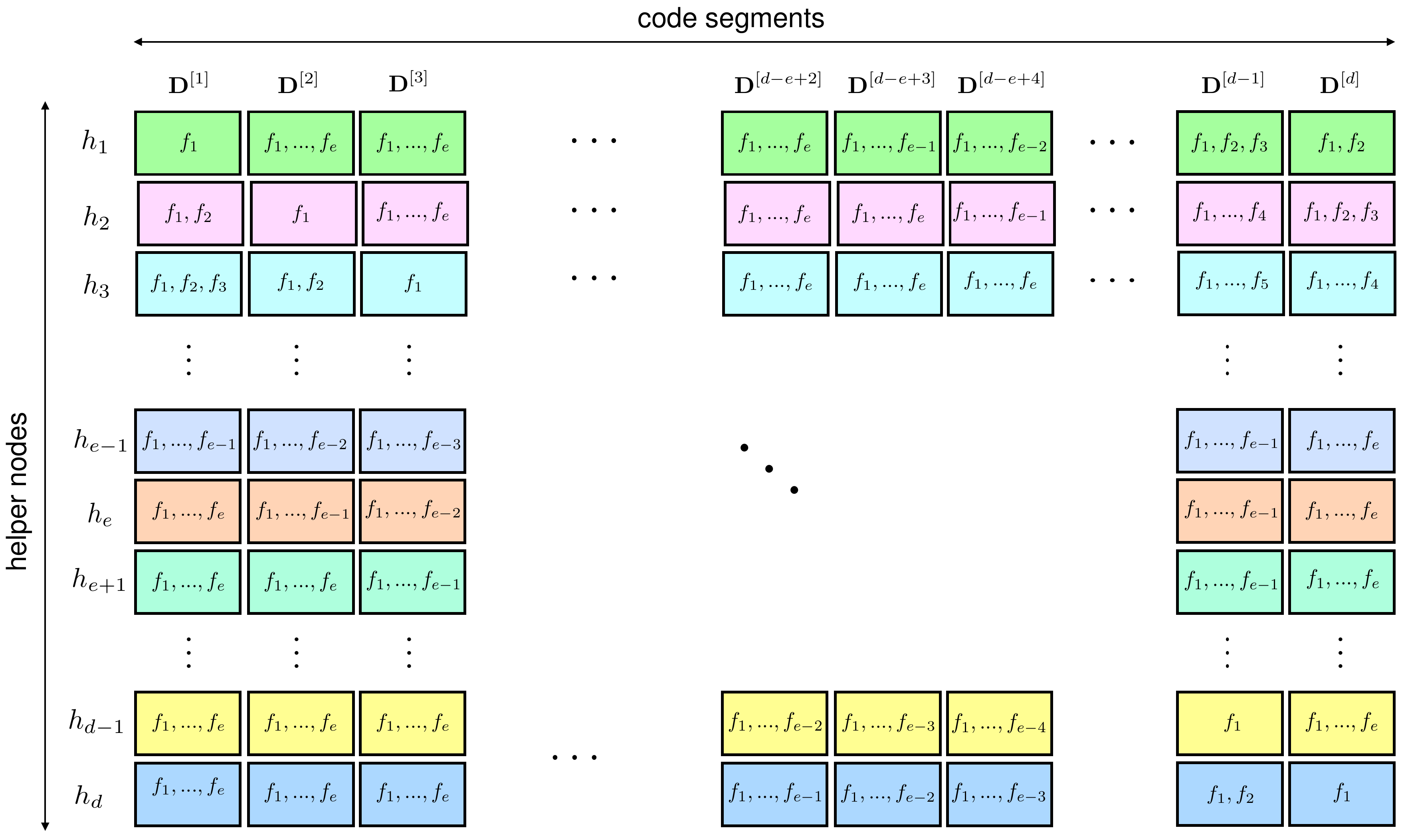}
\caption{The participation of the helper nodes in the multiple failure repair of the super-code. Each cell labeled by $h_i$ and $\SM^{[\ell]}$ shows the set of failed nodes receive repair data from helper node $h_i$ to repair their codeword segment corresponding to code segment $\SM^{[\ell]}$.} 
\label{fig:super-code}
\end{figure}
In the super code, each node stores a total of $d\cdot \alpha^{(m)}$ symbols, including $\alpha$ symbols for each code segment $\SM^{[\ell]}$, and the total storage capacity of the code is $d\cdot F^{(m)}$. In a multiple failure scenario with failed nodes $\sen E=\{f_1,\dots, f_e\}$ and helper nodes $\sen H=\{h_1,\dots, h_d\}$, codeword segments will be repaired separately. The role of helper nodes in the repair process changes in a circular manner, as shown in Fig.~\ref{fig:super-code}. Then, the per-node repair bandwidth of the super code is exactly $d\cdot \bar{\beta}_e^{(m)}$ defined above. Note that the symmetry in the super code is obtained at the expense of the sub-packetization, which is scaled by a factor of $d$. This completes the proof. 

\end{proof}


\end{document}